\documentclass[preprint]{sigplanconf}
\usepackage{proof}
\usepackage{amssymb}
\usepackage{amsmath}
\usepackage{txfonts}
\usepackage{graphicx}
\newtheorem{lemma}{Lemma}[section]

\def\lshfit#1#2{\kern-#1 #2\kern#1}

\def\fillsquare{\kern2pt\raise0.25pt
     \hbox{$\vcenter{\hrule height0pt \hbox{\vrule width5pt height5pt} \hrule height0pt}$}}
\newenvironment{proof}{%
\vskip6pt{\em Proof\kern6pt}}{%
{\unskip\nobreak\hfill\penalty50\kern4pt\hbox{}\nobreak\hfill\fillsquare}\vskip6pt}
\begin{document}
\baselineskip=9.750pt

\setlength{\pdfpagewidth}{\paperwidth}
\setlength{\pdfpageheight}{\paperheight}

\conferenceinfo{}{}
\copyrightyear{2015} 
\copyrightdata{978-1-nnnn-nnnn-n/yy/mm} 
\doi{nnnnnnn.nnnnnnn}

\title{%
Linearly Typed Dyadic Group Sessions
\break for Building Multiparty Sessions
}

\authorinfo%
{Hongwei Xi\and Hanwen Wu}{Boston University}{\{hwxi,hwwu\}@cs.bu.edu}
\maketitle

\def\MTLC{\mbox{MTLC}}

\begin%
{abstract}
Traditionally, each party in a (dyadic or multiparty) session
implements exactly one role specified in the type of the session. We
refer to this kind of session as an individual session (i-session). As
a generalization of i-session, a group session (g-session) is one in
which each party may implement a group of roles based on one
channel. In particular, each of the two parties involved in a dyadic
g-session implements either a group of roles or its complement.  In
this paper, we present a formalization of g-sessions in a
multi-threaded lambda-calculus (MTLC) equipped with a linear type
system, establishing for the MTLC both type preservation and global
progress.  As this formulated MTLC can be readily embedded into ATS, a
full-fledged language with a functional programming core that supports
both dependent types (of DML-style) and linear types, we obtain a
direct implementation of linearly typed g-sessions in ATS. The primary
contribution of the paper lies in both of the identification of
g-sessions as a fundamental building block for multiparty sessions and
the theoretical development in support of this identification.

\end{abstract}
\def\simp{\rightarrow}
\def\timp{\rightarrow}
\def\ctimp{\Rightarrow}
\def\itimp{\rightarrow_{i}}
\def\ltimp{\rightarrow_{l}}
\def\cres{rc}
\def\const{c}
\def\cfun{\mbox{\it cf}}
\def\ccon{\mbox{\it cc}}
\def\ctrue{\mbox{\it true}}
\def\cfalse{\mbox{\it false}}
\def\exp{e}
\def\vexp{\vec{\exp}}
\def\val{v}
\def\vval{\vec{\val}}
\def\xf{\mbox{\it x{\kern0.5pt}f}}
\def\dif{\mbox{\tt if}}
\def\dfst#1{\mbox{\tt fst}(#1)}
\def\dsnd#1{\mbox{\tt snd}(#1)}
\def\dunit{\langle\rangle}
\def\tuple#1{\langle#1\rangle}
\def\lam#1#2{{\tt lam}\;#1.\,#2}
\def\app#1#2{{\tt app}(#1, #2)}
\def\fix#1#2{{\tt fix}\;#1.\,#2}
\def\letin#1#2{{\tt let}\;#1\;{\tt in}\;#2\;{\tt end}}
\def\ty{T}
\def\vw{V}
\def\vwty{\hat{T}}
\def\tvar{\alpha}
\def\vtvar{\hat{\alpha}}
\def\tunit{\mbox{\bf 1}}
\def\tbase{\delta}
\def\tint{\mbox{\bf int}}
\def\tbool{\mbox{\bf bool}}
\def\vtbase{\hat{\tbase}}
\def\tpjg{\vdash}
\def\temd{\models}
\def\SIG{\mbox{\rm SIG}}
\def\langz{\MTLC_0}
\def\langch{\MTLC_{{\rm ch}}}
\def\ST{{\it S}}
\def\stvar{\sigma}
\def\stmsg{\mbox{\tt msg}}
\def\stnil{\mbox{\tt nil}}
\def\stnild{\overline{\stnil}}
\def\dual#1{\mbox{\it dual}(#1)}
\def\chmsg#1#2{\mbox{\tt msg}(#1)::#2}
\def\fsend{\underline{\mbox{\it send}}}
\def\frecv{\underline{\mbox{\it recv}}}
\def\fskip{\underline{\mbox{\it skip}}}
\def\fclose{\underline{\mbox{\it close}}}
\def\fchancreate{\mbox{\it chan\_create}}
\def\fchanlinkcreate{\mbox{\it chan2\_link\_create}}
\def\fchanlinktwo{\mbox{\it chan2\_link}}
\def\fchanlinkthree{\mbox{\it chan3\_link}}
\def\fchposneglink{\mbox{\it chposneg\_link}}
\def\chcst{\mbox{\it ch}}
\def\chpcst{\mbox{\it ch}^{+}}
\def\chncst{\mbox{\it ch}^{-}}
\def\fchancreatetwo{\mbox{\it chan2\_create}}
\def\iset#1{\{#1\}}
\section{Introduction}
\label{section:introduction}
In broad terms, a session is a sequence of interactions between two or
more concurrently running programs (often referred to as parties), and
a session type is a form of type for specifying (or classifying)
sessions. Traditionally, each party in a session implements exactly
one role in the session type assigned to the session. For instance,
each of the two parties in a dyadic session implements either the role
of a client or the role of a server.  Let us suppose that there are
more than two roles in a session type (e.g., seller, buyer 1, and
buyer 2). Conceptually, we can assign this session type to a session
in which one party may implement a group of roles. For instance, there
may be two parties in the session such that one implements the role of
seller and the other implements both of the roles of buyer~1 and
buyer~2. We coin the name {\em g-session} (for group session) to refer
to a session in which a party may implement multiple roles. In
contrast, a session is referred to as an {\em i-session} (for
individual session) if each party in the session implements exactly
one role. Therefore, an i-session is just a special case of g-session
where each involved group is a singleton. As far as we can tell,
this form of generalization from (dyadic) i-sessions to (dyadic)
g-sessions is novel.

\def\chsnd#1#2{\mbox{\tt snd}(#1)::#2}
\def\chrcv#1#2{\mbox{\tt rcv}(#1)::#2}
As an example (for clarifying basic concepts), let us assume that a
dyadic session consists of two running programs (parties) P and Q that
are connected with a bidirectional channel. From the perspective of P,
the channel (that is, the endpoint at P's side)
may be specified by a term sequence of the following form:
$$\chsnd{\tint}{\chsnd{\tint}{\chrcv{\tbool}{\stnil}}}$$ which means
that an integer is to be sent, another integer is to be sent, a
boolean is to be received, and finally the channel is to be closed.
Clearly, from the perspective of Q, the channel (that is, the endpoint
at Q's side) should be specified by the following term sequence:
$$\chrcv{\tint}{\chrcv{\tint}{\chsnd{\tbool}{\stnil}}}$$ which means
precisely the dual of what the previous term sequence does.  We may think
of P as a client who sends two integers to the server Q and then receives
from Q either true or false depending on whether or not the first sent
integer is less than the second one. A simple but crucial observation
is that the above two term sequences can be unified as follows:
$$\chmsg{0,1,\tint}{\chmsg{0,1,\tint}{\chmsg{1,0,\tbool}{\stnil}}}$$
where $0$ and $1$ refer to the two roles implemented by P and Q,
respectively. Given a type $T$, $\stmsg(i,j,T)$ means a value of the
type $T$ is transferred from the party implementing role $i$ to the
one implementing role $j$, where both $i$ and $j$ range over $0$ and
$1$.

\def\tchan{\mbox{\bf chan}}
\begin%
{figure}
\fontsize{8pt}{9pt}\selectfont
\begin%
{verbatim}
fun P() = let
  val () =
    channel_send(CH, I1, 0, 1) // send to Q
  val () =
    channel_send(CH, I2, 0, 1) // send to Q
  val b0 = channel_recv(CH, 1, 0) // recv from Q
  val () = channel_close(CH) // close the P-end of CH
in b0 end (* end of [P] *)

fun Q() = let
  val i1 =
    channel_recv(CH, 0, 1) // recv from P
  val i2 =
    channel_recv(CH, 0, 1) // recv from P
  val () =
    channel_send(CH, i1 < i2, 1, 0) // send to P
  val () = channel_close(CH) // close the Q-end of CH
in () end (* end of [Q] *)
\end{verbatim}
\caption{Some pseudo code in ML-like syntax}
\label{figure:P-and-Q}
\end{figure}
In Figure~\ref{figure:P-and-Q},
we present some pseudo code showing a plausible way to implement the
programs P and Q. Please note that the functions P and Q, though
written together here, can be written in separate contexts. We use
$\mbox{CH}$ to refer to a channel available in the surrounding context
of the code and $\mbox{I1}$ and $\mbox{I2}$ for two integers; the
functions $\mbox{\tt channel\_send}$ and $\mbox{\tt channel\_recv}$
are for sending and receiving data via a given channel, and $\mbox{\tt
  channel\_close}$ for closing a given channel.

Let us now sketch a way to make the above pseudo code typecheck.
Given an integer $i$ and a session type $\ST$, let $\tchan(i,\ST)$ be
the type for a channel of role $i$, that is, a channel held by a party
for implementing role $i$.
We can assign the following type to $\mbox{\tt channel\_send}$:
$$(!\tchan(i,\chmsg{i,j,\ty}{\ST}) \gg \tchan(i,\ST), \tint(i), \tint(j),
\ty) \timp \tunit$$ where $i\neq j$ is assumed, and $\tint(i)$ and $\tint(j)$
are singleton types for integers equal to $i$ and $j$, respectively, and $\ty$
and $\ST$ stand for a type and a session type, respectively.
Basically, this type%
\footnote{%
Strictly speaking, this type should be referred to as a type schema as
it contains occurrences of meta-variables.%
} means that calling $\mbox{\tt channel\_send}$ on a channel of the
type $\tchan(i,\chmsg{i,j,\ty}{\ST})$, integer $i$, integer $j$ and a
value of the type $\ty$ returns a unit while {\it changing} the type
of the channel to $\tchan(i,\ST)$.  Clearly, $\tchan$ is required be a
linear type constructor for this to make sense. As can be expected,
the type assigned to $\mbox{\tt channel\_recv}$ should be of the
following form:
$$(!\tchan(j,\chmsg{i,j,\ty}{\ST}) \gg \tchan(j,\ST), \tint(i),
\tint(j)) \timp \ty$$ where $i\neq j$ is assumed. This type
essentially indicates that calling $\mbox{\tt channel\_recv}$ on a
channel of the type $\tchan(j,\chmsg{i,j,\ty}{\ST})$, integer $i$ and
integer $j$ returns a value of the type $\ty$ while {\it changing} the
type of the channel to $\tchan(j,\ST)$.  As for $\mbox{\tt
  channel\_close}$, it is assigned the following
type: $$(\tchan(i,\stnil))\timp\tunit$$ indicating that calling
$\mbox{\tt channel\_close}$ on a channel consumes the channel (so that
the channel is no longer available for use).

\def\setcomp#1{{\overline#1}}
Given an integer $i$ (representing a role) and a session type $\ST$,
the type $\tchan(i, \ST)$ for single-role channels can be naturally
transitioned into one of the form $\tchan(G, \ST)$ for multirole
channels, where $G$ stands for a finite set of integers (representing
roles). The fundamental issue to be addressed in this transition is to
figure out a consistent interpretation for each term $\stmsg(i,j,T)$
by a party based on the group of roles it implements.  Assume there
exists a fixed set of $n$ roles ranging from $0$ to $n-1$ for some
$n\geq 2$. For each $G$, we use $\setcomp{G}$ for the complement of
$G$, which consists of all of the natural numbers less than $n$ that
are not in $G$.  We have the following four scenarios for interpreting
$\stmsg(i,j,T)$ based on the group $G$ of roles implemented by a party:
\begin%
{itemize}
\item
Assume $i\in G$ and $j\in G$.
Then $\stmsg(i,j,T)$ is interpreted as an internal message, and
it is ignored.
\item
Assume $i\in G$ and $j\not\in G$.  Then $\stmsg(i,j,T)$ is interpreted
as sending a value of the type $T$ by the party implementing $G$ to
the party implementing $\setcomp{G}$.
\item
Assume $i\not\in G$ and $j\in G$.  Then $\stmsg(i,j,T)$ is interpreted
as receiving a value of the type $T$ by the party implementing $G$
from the party implementing $\setcomp{G}$.
\item
Assume $i\not\in G$ and $j\not\in G$.  Then $\stmsg(i,j,T)$ is
interpreted as an external message, and it is ignored.
\end{itemize}
With this interpretation,
$\mbox{\tt channel\_send}$ can be assigned the following type:
$$(!\tchan(G,\chmsg{i,j,\ty}{\ST}) \gg \tchan(G,\ST),\tint(i),\tint(j),
\ty) \timp \tunit$$
where $i\in G$ and $j\not\in G$ is assumed;
$\mbox{\tt channel\_recv}$ can be assigned the following type:
$$(!\tchan(G,\chmsg{i,j,\ty}{\ST}) \gg \tchan(G,\ST),\tint(i),\tint(j))
\timp \ty$$
where $i\not\in G$ and $j\in G$ is assumed. As for $\mbox{\tt channel\_close}$,
the following type is assigned:
$$(\tchan(G,\stnil))\timp\tunit$$
While transitioning single-role channels into multirole channels may
seem mostly intuitive, there are surprises. In particular, we have the
following result for justifying the use of multirole channels as a
building block for implementing multiparty sessions (that involve more
than 2 parties):
\begin%
{theorem}
\label{theorem:chan2_link_intro}
Assume that $\chcst_0$ and $\chcst_1$ are two multirole channels
(held by a party belonging to two sessions) of the types
$\tchan(G_0,\ST)$ and $\tchan(G_1,\ST)$, respectively, where
$\setcomp{G_0}$ and $\setcomp{G_1}$ are disjoint. Then there is a
generic method for building a multirole channel $\chcst_2$ of the type
$\tchan(G_0\cap G_1,\ST)$ such that each message received on one of
$\chcst_0$, $\chcst_1$ and $\chcst'_2$ can be forwarded onto one of
the other two in a type-correct manner, where $\chcst'_2$ refers to
the dual of $\chcst_2$.
\end{theorem}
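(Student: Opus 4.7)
The plan is to construct the multirole channel $\chcst_2$ together with a hub process that relays messages among $\chcst_0$, $\chcst_1$, and $\chcst'_2$, and to verify type-correctness by a case analysis guided by the structure of $\ST$. First I would extract the structural consequence of the hypothesis: since $\setcomp{G_0} \cap \setcomp{G_1} = \emptyset$, De~Morgan's law gives $G_0 \cup G_1$ equal to the entire role set, so the three sets $\setcomp{G_0}$, $\setcomp{G_1}$, and $G_0 \cap G_1$ partition the roles, and $\setcomp{G_0 \cap G_1} = \setcomp{G_0} \cup \setcomp{G_1}$. Using the channel-creation primitive I would then allocate a fresh pair $(\chcst_2, \chcst'_2)$ at types $\tchan(G_0 \cap G_1, \ST)$ and $\tchan(\setcomp{G_0 \cap G_1}, \ST)$, hand $\chcst_2$ to the caller, and retain $\chcst'_2$ alongside $\chcst_0$ and $\chcst_1$ for the forwarder.

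Next, I would define the forwarder as a loop that walks through $\ST$ one message at a time. For each $\stmsg(i,j,T)$ at the head of the protocol, the forwarder classifies $i$ and $j$ by the unique parts of the partition they inhabit and dispatches according to the four-scenario interpretation applied in parallel to the three local groups $G_0$, $G_1$, and $\setcomp{G_0 \cap G_1}$. The key observation, which I would verify by direct inspection, is that each of the three channels is ``paired against'' exactly one part of the partition, namely $\chcst_0$ against $\setcomp{G_0}$, $\chcst_1$ against $\setcomp{G_1}$, and $\chcst'_2$ against $G_0 \cap G_1$, in the sense that for each channel a role lies in its local group precisely when it lies outside the paired part. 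From this alignment the two possible shapes of a forwarding step read off cleanly: if $i$ and $j$ inhabit the same part, all three channels classify $\stmsg(i,j,T)$ as ignorable and advance their type indices from $\chmsg{i,j,T}{\ST'}$ to $\ST'$ with no I/O; if they inhabit different parts, the channel paired against $i$'s part prescribes a receive of a $T$-value, the channel paired against $j$'s part prescribes a send of a $T$-value, and the remaining channel ignores the message, so the forwarder receives on the former and transmits on the latter, which is well-typed by the typings already given to $\mbox{\tt channel\_send}$ and $\mbox{\tt channel\_recv}$.

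A single induction on the message-prefix length of $\ST$ then closes the argument, since after each step all three channels carry a common residual protocol $\ST'$ and the forwarding invariant is re-established. I expect the main obstacle to be not any individual case but the sheer bookkeeping of the $3 \times 3$ combinations of parts for $(i,j)$: the partition-and-pairing observation is precisely what keeps the analysis uniform, collapsing it into three ``ignore'' cases and six symmetric ``forward'' cases, each reducing to a one-line membership check using $\setcomp{G_0} \subseteq G_1$ and $\setcomp{G_1} \subseteq G_0$.
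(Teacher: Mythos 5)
Your proposal is correct and takes essentially the same route as the paper: there, $\chcst_2$ is obtained by evaluating $\fchancreate(\lam{x}{\fchanlinkthree(\chcst_0,\chcst_1,x)})$, where $\fchanlinkthree$ is exactly your forwarder, justified by the same case analysis over the partition $\setcomp{G_0}$, $\setcomp{G_1}$, $G_0\cap G_1$ (plus $\fclose$ on all three channels when $\ST=\stnil$) followed by a recursive call. Your ``pairing'' observation merely repackages the paper's explicit nine cases into the three ignore cases and six symmetric forward cases.
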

The significance of Theorem~\ref{theorem:chan2_link_intro} will be
elaborated later on. Intuitively, this theorem justifies some form of
``wiring'' to allow two existing channels to be connected to provide
the behavior of new channel (related to them in some way), enabling
a multiparty session to be built based on dyadic g-sessions.

ATS~\cite{ATS-types03,CPwTP} is a full-fledged language with a
functional programming core based on ML that supports both dependent
types (of DML-style~\cite{XP99,DML-jfp07}) and linear types. Its
highly expressive type system makes it largely straightforward to
implement session types in ATS (e.g., based on the outline given
above) if our concern is primarily about type-correctness. For
instance, there have already been implementations of session types in
Haskell (e.g.,~\cite{NeubauerT04,PucellaT08}) and elsewhere that offer
type-correctness. However, mere type-correctness is inadequate. We are
to establish formally the property that concurrency based on session
types (formulated in this paper) can never result in deadlocking,
which is often referred to as {\em global progress}.  There have been
many formalizations of session types in the literature (e.g.,
~\cite{Honda93,HondaVK98,CastagnaDGP09,GayV10,Vasconcelos12,Wadler12,ToninhoCP13}).
Often the dynamics formulated in a formalization of session types is
based on $\pi$-calculus~\cite{PiCalculus} or its variants/likes.  We
instead use multi-threaded $\lambda$-calculus (MTLC) as a basis for
formalizing session types as such a formalization is particularly
suitable for guiding implementation (due to its being less abstract
and more operational).

The rest of the paper is organized as follows.  In
Section~\ref{section:langz}, we formulate a multi-threaded
$\lambda$-calculus $\langz$ equipped with a simple linear type system,
setting up the basic machinery for further development. We then extend
$\langz$ to $\langch$ in Section~\ref{section:langch} with support for
session types and establish both type preservation and global progress
for $\langch$. In Section~\ref{section:dyadic_to_multiparty}, We
establish a key theorem needed for building multiparty sessions based
on dyadic sessions. We present a few commonly used constructors for
session types in Section~\ref{section:more_st_constructors} and then
briefly mention the implementation of a classic example of 3-party
sessions in Section~\ref{section:S0B1B2}. We also mention some key
steps taken in both of our implementations of session-typed channels
in ATS and in Erlang in Section~\ref{section:implementation}.
Lastly, we discuss some closely related work in
Section~\ref{section:related-conclusion} and then conclude.

The primary contribution of the paper lies in both of the
identification of g-sessions as a fundamental building block for
multiparty sessions and the theoretical development in support of this
identification.  We consider the formulation and proof of
Theorem~\ref{theorem:chan2_link_intro} a particularly important part
of this contribution.

\section%
{$\langz$ with Linear Types}
\label{section:langz}
\begin%
{figure}
\[%
\fontsize{8pt}{9pt}\selectfont
\begin%
{array}{lrcl}
\mbox{expr.} & \exp & ::= &
x \mid f \mid \cres \mid \const(\vexp) \mid \\
             &      &     &
\dunit \mid \tuple{\exp_1,\exp_2} \mid \dfst{\exp} \mid \dsnd{\exp} \mid \\
             &      &     &
\letin{\tuple{x_1,x_2}=\exp_1}{\exp_2} \mid \\
             &      &     &
\lam{x}{\exp} \mid \app{\exp_1}{\exp_2} \mid \fix{f}{\val}  \\
\mbox{values} & v & ::= &
x \mid \cres \mid \ccon(\vval) \mid \dunit \mid \tuple{\val_1,\val_2} \mid \lam{x}{\exp} \\
\mbox{types} & \ty & ::= &
\tbase \mid \tunit \mid \ty_1*\ty_2 \mid \vwty_1\itimp\vwty_2  \\
\mbox{viewtypes} & \vwty & ::= & \vtbase \mid \ty \mid \vwty_1\otimes\vwty_2 \mid \vwty_1\ltimp\vwty_2 \\
\mbox{int. expr. ctx.\kern-6pt} & \Gamma & ::= & \emptyset \mid \Gamma, \xf:\ty \\
\mbox{lin. expr. ctx.\kern-6pt} & \Delta & ::= & \emptyset \mid \Delta, x:\vwty \\
\end{array}\]
\caption{Some syntax for $\langz$}
\label{figure:langz:syntax}
\end{figure}
We first present a multi-threaded lambda-calculus $\langz$ equipped with a
simple linear type system, setting up the basic machinery for further
development. The dynamic semantics of $\langz$ can essentially be seen as
an abstract form of evaluation of multi-threaded programs.

Some syntax of $\langz$ is given in Figure~\ref{figure:langz:syntax}.
We use $x$ for a lam-variable and $f$ for a fix-variable, and $\xf$ for
either a lam-variable or a fix-variable. Note that a lam-variable is
considered a value but a fix-variable is not.
We use $\cres$ for constant resources and $\const$ for constants, which
include both constant functions $\cfun$ and constant constructors $\ccon$.
We treat resources in $\langz$ abstractly and will later introduce
communication channels as a concrete form of resources.
The meaning of various standard forms of expressions in $\langz$
should be intuitively clear.  We may refer to a closed expression
(containing no free variables) as a {\em program}.

We use $\ty$ and $\vwty$ for (non-linear) types and (linear)
viewtypes, respectively, and refer $\vwty$ to as a true viewtype if it
is a viewtype but not a type. We use $\tbase$ and $\vtbase$ for base
types and base viewtypes, respectively.  For instance, $\tbool$ is the
base type for booleans and $\tint$ for integers.  We also assume the
availability of integer constants when forming types. For instance, we
may have a type $\tint(i)$ for each integer constant $i$, which can
only be assigned to a (dynamic) value equal to integer $i$.

For a simplified presentation, we do not introduce any concrete base
viewtypes in $\langz$. We assume a signature $\SIG$ for assigning a
viewtype to each constant resource $\cres$ and a constant type
(c-type) schema of the form $(\vwty_1,\ldots,\vwty_n)\ctimp\vwty$ to
each constant. For instance, we may have a constant function
$\mbox{\it iadd}$ of the following c-type schema:
$$(\tint(i), \tint(j))\timp\tint(i+j)$$
where $i$ and $j$ are meta-variables ranging over integer constants;
each occurrence of $\mbox{\it iadd}$ in a program is given a c-type that
is an instance of the c-type schema assigned to $\mbox{\it iadd}$.


Note that a type is always considered a viewtype.  Let $\vwty_1$ and
$\vwty_2$ be two viewtypes. The type constructor $\otimes$ is based on
multiplicative conjunction in linear logic.  Intuitively, if a
resource is assigned the viewtype $\vwty_1\otimes\vwty_2$, then the
resource is a conjunction of two resources of viewtypes $\vwty_1$ and
$\vwty_2$.  The type constructor $\ltimp$ is essentially based on
linear implication $\multimap$ in linear logic.  Given a function of
the viewtype $\vwty_1\ltimp\vwty_2$ and a value of the viewtype
$\vwty_1$, applying the function to the value yields a result of the
viewtype $\vwty_2$ while the function itself is consumed. If the
function is of the type $\vwty_1\itimp\vwty_2$, then applying the
function does not consume it. The subscript $i$ in $\itimp$ is often
dropped, that is, $\timp$ is assumed to be $\itimp$ by default.  The
meaning of various forms of types and viewtypes is to be made clear
and precise when the rules are presented for assigning viewtypes to
expressions in $\langz$.

\def\fthreadcreate{\mbox{\it thread\_create}}
There is a special constant function $\fthreadcreate$ in $\langz$
for thread creation, which is assigned the following interesting
c-type:
\[\begin{array}{rcl}
\fthreadcreate & : & (\tunit\ltimp\tunit) \ctimp \tunit
\end{array}\]
A function of the type $\tunit\ltimp\tunit$ is a procedure that takes no
arguments and returns no result (when its evaluation terminates).  Given
that $\tunit\ltimp\tunit$ is a true viewtype, a procedure of this type may
contain resources and thus must be called exactly once. The operational
semantics of $\fthreadcreate$ is to be formally defined later.

\def\dom{\mbox{\bf dom}}
\def\mapadd#1#2#3{#1[#2\mapsto #3]}
\def\mapdel#1#2{#1\backslash#2}
\def\maprep#1#2#3{#1[#2 := #3]}
A variety of mappings, finite or infinite, are to be introduced in the
rest of the presentation.  We use $[]$ for the empty mapping and
$[i_1,\ldots,i_n\mapsto o_1,\ldots,o_n]$ for the finite mapping that maps
$i_k$ to $o_k$ for $1\leq k\leq n$.  Given a mapping $m$, we write
$\dom(m)$ for the domain of $m$. If $i\not\in\dom(m)$, we use
$\mapadd{m}{i}{o}$ for the mapping that extends $m$ with a link from $i$ to
$o$.  If $i\in\dom(m)$, we use $\mapdel{m}{i}$ for the mapping obtained
from removing the link from $i$ to $m(i)$ in $m$, and $\maprep{m}{i}{o}$
for $\mapadd{(\mapdel{m}{i})}{i}{o}$, that is, the mapping obtained from
replacing the link from $i$ to $m(i)$ in $m$ with another link from $i$ to
$o$.

\def\resof{\rho}
\begin{figure}
\[\begin{array}{rcl}
\resof(\cres) & = & \{\cres\} \\
\resof(\const(\exp_1,\ldots,\exp_n)) & = & \resof(\exp_1)\uplus\cdots\uplus\resof(\exp_n) \\
\resof(\xf) & = & \emptyset \\
\resof(\dunit) & = & \emptyset \\
\resof(\tuple{\exp_1,\exp_2}) & = & \resof(\exp_1)\uplus\resof(\exp_2) \\
\resof(\dfst{\exp}) & = & \resof(\exp) \\
\resof(\dsnd{\exp}) & = & \resof(\exp) \\
\resof(\dif(\exp_0,\exp_1,\exp_2)) & = & \resof(\exp_0)\uplus\resof(\exp_1) \\
\resof(\letin{\tuple{x_1,x_2}=\exp_1}{\exp_2}) & = & \resof(\exp_1)\uplus\resof(\exp_2) \\
\resof(\lam{x}{\exp}) & = & \resof(\exp) \\
\resof(\app{\exp_1}{\exp_2}) & = & \resof(\exp_1)\uplus\resof(\exp_2) \\
\resof(\fix{f}{\val}) & = & \resof(\val) \\
\end{array}\]
\caption{The definition of $\resof(\cdot)$}
\label{figure:resof}
\end{figure}
We define a function $\resof(\cdot)$ in Figure~\ref{figure:resof} to
compute the multiset (that is, bag) of constant resources in a given
expression.  Note that $\uplus$ denotes the multiset union. In the type
system of $\langz$, it is to be guaranteed that $\resof(\exp_1)$ equals
$\resof(\exp_2)$ whenever an expression of the form
$\dif(\exp_0,\exp_1,\exp_2)$ is constructed, and this justifies
$\resof(\dif(\exp_0,\exp_1,\exp_2))$ being defined as
$\resof(\exp_0)\uplus\resof(\exp_1)$.

\def\rns{R}
\def\RES{{\bf RES}}
We use $\rns$ to range over finite multisets of resources. Therefore,
$\rns$ can also be regarded as a mapping from resources to natural numbers:
$\rns(\cres) = n$ means that there are $n$ occurrences of $\cres$ in
$\rns$. It is clear that we may not combine resources arbitrarily. For
instance, we may want to exclude the combination of one resource stating
integer 0 at a location L and another one stating integer 1 at the same
location. We fix an abstract collection $\RES$ of finite multisets of
resources and assume the following:
\begin{itemize}
\item
$\emptyset\in\RES$.
\item
For any $\rns_1$ and
$\rns_2$, $\rns_2\in\RES$ if $\rns_1\in\RES$ and $\rns_2\subseteq\rns_1$,
where $\subseteq$ is the subset relation on multisets.
\end{itemize}
We say that $\rns$ is a valid multiset of resources if $\rns\in\RES$ holds.

\def\pool{\Pi}
\def\tid{\mbox{\it tid}}
\def\esubst{\theta}
\def\tsubst{\Theta}

In order to formalize threads, we introduce a notion of {\em
  pools}. Conceptually, a pool is just a collection of programs (that is,
closed expressions).  We use $\pool$ for pools, which are formally defined
as finite mappings from thread ids (represented as natural numbers) to
(closed) expressions in $\langz$ such that $0$ is always in the domain of
such mappings.  Given a pool $\pool$ and $\tid\in\dom(\pool)$, we refer to
$\pool(\tid)$ as a thread in $\pool$ whose id equals $\tid$. In particular,
we refer to $\pool(0)$ as the main thread in $\pool$.  The definition of
$\resof(\cdot)$ is extended as follows to compute the multiset of resources
in a given pool:
\begin%
{center}
$\resof(\pool)=\biguplus_{\tid\in\dom(\pool)}\resof(\pool(\tid))$
\end{center}
We are to define a relation on pools in
Section~\ref{section:langz:dynamics} to simulate multi-threaded program
execution.

\begin{figure}
\fontsize{8}{8}\selectfont
\[\begin{array}{c}
\infer[\mbox{\bf (ty-res)}]
      {\Gamma;\emptyset\tpjg\cres:\vtbase}
      {\SIG\temd\cres:\vtbase} \\[2pt]
\infer[\mbox{\bf (ty-cst)}]
      {\Gamma;\Delta_1,\ldots,\Delta_n\tpjg\const(\exp_1,\ldots,\exp_n):\vwty}
      {$$\begin{array}{c}
       \SIG\temd\const:(\vwty_1,\ldots,\vwty_n)\ctimp\vwty \\
       \Gamma;\Delta_i\tpjg\exp_i:\vwty_i~~\mbox{for $1\leq i\leq n$} \\
       \end{array}$$} \\[2pt]
\infer[\mbox{\bf (ty-var-i)}]
      {(\Gamma,\xf:\ty;\emptyset)\tpjg \xf:\ty}
      {} \\[2pt]
\infer[\mbox{\bf (ty-var-l)}]
      {(\Gamma;\emptyset,x:\vwty)\tpjg x:\vwty}
      {} \\[2pt]
\infer[\mbox{\bf(ty-if)}]
      {\Gamma;\Delta_0,\Delta\tpjg\dif(\exp_0,\exp_1,\exp_2):\vwty}
      {$$\begin{array}{c}
       \Gamma;\Delta_0\tpjg\exp_0:\tbool \\
       \Gamma;\Delta\tpjg \exp_1:\vwty \kern6pt
       \Gamma;\Delta\tpjg \exp_2:\vwty \kern6pt
       \resof(\exp_1) = \resof(\exp_2) \\
       \end{array}$$} \\[2pt]
\infer[\mbox{\bf(ty-unit)}]
      {\Gamma;\emptyset\tpjg\dunit:\tunit}{} \\[2pt]
\infer[\mbox{\bf(ty-tup-i)}]
      {\Gamma;\Delta_1,\Delta_2\tpjg\tuple{\exp_1,\exp_2}:\ty_1*\ty_2}
      {\Gamma;\Delta_1\tpjg\exp_1:\ty_1 &
       \Gamma;\Delta_2\tpjg\exp_2:\ty_2 } \\[2pt]
\infer[\mbox{\bf(ty-fst)}]
      {\Gamma;\Delta\tpjg\dfst{\exp}:\ty_1}
      {\Gamma;\Delta\tpjg\exp:\ty_1*\ty_2}
\kern12pt
\infer[\mbox{\bf(ty-snd)}]
      {\Gamma;\Delta\tpjg\dsnd{\exp}:\ty_2}
      {\Gamma;\Delta\tpjg\exp:\ty_1*\ty_2} \\[2pt]
\infer[\mbox{\bf(ty-tup-l)}]
      {\Gamma;\Delta_1,\Delta_2\tpjg\tuple{\exp_1,\exp_2}:\vwty_1\otimes\vwty_2}
      {\Gamma;\Delta_1\tpjg\exp_1:\vwty_1 &
       \Gamma;\Delta_2\tpjg\exp_2:\vwty_2 } \\[2pt]
\infer[\mbox{\bf(ty-tup-l-elim)}]
      {\Gamma;\Delta_1,\Delta_2\tpjg\letin{\tuple{x_1,x_2}=\exp_1}{\exp_2}:\vwty}
      {$$\begin{array}{c}
       \Gamma;\Delta_1\tpjg\exp_1:\vwty_1\otimes\vwty_2 \\
       \Gamma;\Delta_2,x_1:\vwty_1,x_2:\vwty_2\tpjg\exp_2:\vwty \\
       \end{array}$$} \\[2pt]
\infer[\mbox{\bf(ty-lam-l)}]
      {\Gamma;\Delta\tpjg\lam{x}{\exp}:\vwty_1\ltimp\vwty_2}
      {(\Gamma;\Delta),x:\vwty_1\tpjg\exp:\vwty_2} \\[2pt]
\infer[\mbox{\bf(ty-app-l)}]
      {\Gamma;\Delta_1,\Delta_2\tpjg\app{\exp_1}{\exp_2}:\vwty_2}
      {\Gamma;\Delta_1\tpjg\exp_1:\vwty_1\ltimp\vwty_2 &
       \Gamma;\Delta_2\tpjg\exp_2:\vwty_1 } \\[2pt]
\infer[\mbox{\bf(ty-lam-i)}]
      {\Gamma;\emptyset\tpjg\lam{x}{\exp}:\vwty_1\itimp\vwty_2}
      {(\Gamma;\emptyset),x:\vwty_1\tpjg\exp:\vwty_2 &
       \resof(\exp) = \emptyset} \\[2pt]
\infer[\mbox{\bf(ty-app-i)}]
      {\Gamma;\Delta_1,\Delta_2\tpjg\app{\exp_1}{\exp_2}:\vwty_2}
      {\Gamma;\Delta_1\tpjg\exp_1:\vwty_1\itimp\vwty_2 &
       \Gamma;\Delta_2\tpjg\exp_2:\vwty_1 } \\[2pt]
\infer[\mbox{\bf(ty-fix)}]
      {\Gamma;\emptyset\tpjg\fix{f}{\val}:\ty}
      {\Gamma, f:\ty;\emptyset\tpjg\val:\ty} \\[2pt]
\infer[\mbox{\bf(ty-pool)}]
      {\tpjg\pool:\vwty}
      {$$\begin{array}{c}
       (\emptyset;\emptyset)\tpjg\pool(0):\vwty \\
       (\emptyset;\emptyset)\tpjg\pool(\tid):\tunit~~\mbox{for each $0<\tid\in\dom(\pool)$} \\
       \end{array}$$} \\[2pt]
\end{array}\]
\caption{The typing rules for $\langz$}
\label{figure:langz:typing_rules}
\end{figure}
\subsection{Static Semantics}
We present typing rules for $\langz$ in this section.  It is required that
each variable occur at most once in an intuitionistic (linear) expression
context $\Gamma$ ($\Delta$), and thus $\Gamma$ ($\Delta$) can be regarded
as a finite mapping.  Given $\Gamma_1$ and $\Gamma_2$ such that
$\dom(\Gamma_1)\cap\dom(\Gamma_2)=\emptyset$, we write
$(\Gamma_1,\Gamma_2)$ for the union of $\Gamma_1$ and $\Gamma_2$.  The same
notation also applies to linear expression contexts ($\Delta$).  Given an
intuitionistic expression context $\Gamma$ and a linear expression context
$\Delta$, we can form a combined expression context $(\Gamma;\Delta)$ if
$\dom(\Gamma)\cap\dom(\Delta)=\emptyset$.  Given $(\Gamma;\Delta)$, we may
write $(\Gamma;\Delta),x:\vwty$ for either $(\Gamma;\Delta,x:\vwty)$ or
$(\Gamma,x:\vwty;\Delta)$ (if $\vwty$ is actually a type).


A typing judgment in $\langz$ is of the form
$(\Gamma;\Delta)\tpjg\exp:\vwty$, meaning that $\exp$ can be assigned
the viewtype $\vwty$ under $(\Gamma;\Delta)$.  The typing rules for
$\langz$ are listed in Figure~\ref{figure:langz:typing_rules}.
In the rule $\mbox{\bf(ty-cst)}$, the following judgment requires that
the c-type be an instance of the c-type schema assigned to $\const$ in
$\SIG$:
$$\SIG\temd\const:(\vwty_1,\ldots,\vwty_n)\ctimp\vwty$$
For the constant function $\mbox{\it iadd}$ mentioned
previously, the following judgment is valid:
$$\SIG\temd\mbox{\it iadd}:(\tint(0), \tint(1))\ctimp\tint(0+1)$$
and the following judgment is valid as well:
$$\SIG\temd\mbox{\it iadd}:(\tint(2), \tint(2))\ctimp\tint(2+2)$$

By inspecting the typing rules in
Figure~\ref{figure:langz:typing_rules}, we can readily see that a
closed value cannot contain any resources if the value itself can be
assigned a type (rather than a linear type).  More formally, we have
the following proposition:
\begin%
{proposition}
\label{proposition:value_type}
If $(\emptyset;\emptyset)\tpjg\val:\ty$ is derivable,
then $\resof(\val)=\emptyset$.
\end{proposition}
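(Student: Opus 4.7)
The plan is to proceed by structural induction on the closed value $\val$ (equivalently, by induction on the derivation of $(\emptyset;\emptyset)\tpjg\val:\ty$), inspecting in each case which typing rules from Figure~\ref{figure:langz:typing_rules} can possibly conclude with a \emph{type} $\ty$ rather than a true viewtype.

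Several value-forms are immediately ruled out or trivial. The case $\val=x$ is vacuous because both expression contexts are empty, so neither (ty-var-i) nor (ty-var-l) applies. The case $\val=\cres$ is vacuous because the only rule that applies is (ty-res), which assigns a base viewtype $\vtbase$; since $\vtbase$ is syntactically distinct from $\tbase$ and does not appear in the grammar of types, the conclusion cannot be of the form $\ty$. The case $\val=\dunit$ is immediate, since $\resof(\dunit)=\emptyset$ by definition.

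For $\val=\tuple{\val_1,\val_2}$, the derivation could in principle end with (ty-tup-i), yielding $\ty_1*\ty_2$, or with (ty-tup-l), yielding $\vwty_1\otimes\vwty_2$; only the former is a type, so the derivation must end with (ty-tup-i). The premises then give $(\emptyset;\emptyset)\tpjg\val_i:\ty_i$ for $i=1,2$, and the induction hypothesis supplies $\resof(\val_i)=\emptyset$, whence $\resof(\val)=\emptyset$ by the definition of $\resof$ on tuples. For $\val=\lam{x}{\exp}$, only (ty-lam-i) can conclude with a type (since $\ltimp$ is a true-viewtype constructor), and the premise of (ty-lam-i) explicitly demands $\resof(\exp)=\emptyset$, which by definition equals $\resof(\val)$.

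The most delicate case is $\val=\ccon(\vval)$, derived by (ty-cst). Here $\resof(\val)=\biguplus_i\resof(\val_i)$, so I would need each $\resof(\val_i)$ to be empty, which the induction hypothesis provides only when each argument viewtype $\vwty_i$ is in fact a type. The plan is to invoke a well-formedness convention on the signature $\SIG$: whenever a c-type schema $(\vwty_1,\ldots,\vwty_n)\ctimp\vwty$ has a non-linear result (that is, $\vwty$ is a type), each argument $\vwty_i$ is also a type. This amounts to the natural principle that a non-linear constructor cannot smuggle a resource into its output, and I expect it to be the only substantive assumption required for the argument to close uniformly; without such a convention a constructor could package resources behind a type and the proposition would fail.
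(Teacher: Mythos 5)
Your induction on the typing derivation is exactly the ``inspection of the typing rules'' that the paper offers as its entire justification for this proposition, and every case is handled correctly, including the identification of (ty-lam-i)'s premise $\resof(\exp)=\emptyset$ as the crux of the function case. Your observation about the constant-constructor case is the one substantive point: the paper leaves the corresponding well-formedness condition on $\SIG$ implicit (its remark that the rules are ``so formulated in order to make it hold'' presumably extends to the signature), so making that convention explicit is a refinement of the paper's argument rather than a deviation from it.
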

This proposition plays a fundamental role in $\langz$:
The rules in Figure~\ref{figure:langz:typing_rules} are actually
so formulated in order to make it hold.

The following lemma, which is often referred to as
{\em Lemma of Canonical Forms}, relates the form of a value to its type:
\begin{lemma}\label{lemma:langz:canonical}
Assume that $(\emptyset;\emptyset)\tpjg\val:\vwty$ is derivable.
\begin{itemize}
\item
If $\vwty=\tbase$,
then $\val$ is of the form $\ccon(\val_1,\ldots,\val_n)$.
\item
If $\vwty=\vtbase$,
then $\val$ is of the form $\cres$ or $\ccon(\val_1,\ldots,\val_n)$.
\item
If $\vwty=\tunit$, then $\val$ is $\dunit$.
\item
If
$\vwty=\ty_1*\ty_2$
or
$\vwty=\vwty_1\otimes\vwty_2$,
then $\val$ is of the form $\tuple{\val_1,\val_2}$.
\item
If $\vwty=\vwty_1\itimp\vwty_2$ or $\vwty=\vwty_1\ltimp\vwty_2$, then
$\val$ is of the form $\lam{x}{\exp}$.
\end{itemize}
\end{lemma}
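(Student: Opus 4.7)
The plan is to proceed by case analysis on the form of $\val$ combined with inversion on the last rule used in the derivation of $(\emptyset;\emptyset)\tpjg\val:\vwty$. Since $\val$ is a value, the grammar in Figure~\ref{figure:langz:syntax} constrains it to be one of $x$, $\cres$, $\ccon(\vval)$, $\dunit$, $\tuple{\val_1,\val_2}$, or $\lam{x}{\exp}$. I would first discard the variable case at the outset: both $\Gamma$ and $\Delta$ are empty in the hypothesis, so neither \textbf{(ty-var-i)} nor \textbf{(ty-var-l)} can conclude the derivation, and no other rule assigns a type to a bare variable.

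For each remaining form of $\val$, inspection of Figure~\ref{figure:langz:typing_rules} shows that only a small number of rules can have been applied last, and each such rule uniquely determines the shape of $\vwty$: \textbf{(ty-res)} pins $\cres$ to a base viewtype $\vtbase$; \textbf{(ty-cst)} assigns $\ccon(\vval)$ the return viewtype prescribed by $\SIG$ for $\ccon$, which under the intended convention on constant constructors is either $\tbase$ or $\vtbase$; \textbf{(ty-unit)} forces $\dunit$ to have type $\tunit$; \textbf{(ty-tup-i)} and \textbf{(ty-tup-l)} force $\tuple{\val_1,\val_2}$ to have type $\ty_1*\ty_2$ or $\vwty_1\otimes\vwty_2$ respectively; and \textbf{(ty-lam-i)} and \textbf{(ty-lam-l)} force $\lam{x}{\exp}$ to have type $\vwty_1\itimp\vwty_2$ or $\vwty_1\ltimp\vwty_2$ respectively. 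Transposing this correspondence, that is, fixing $\vwty$ and reading off which value forms can inhabit it, would then yield precisely the five clauses of the lemma.

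The main obstacle, such as it is, is really just bookkeeping: a constant constructor $\ccon$ can inhabit both a base type and a base viewtype, which is why $\ccon(\val_1,\ldots,\val_n)$ must be listed in both the $\vwty=\tbase$ and the $\vwty=\vtbase$ clauses. One also needs to verify that the signature $\SIG$ is not permitted to assign a constant constructor a product-shaped or arrow-shaped return type, else additional value forms would have to appear in the later clauses; this is exactly the informal convention on $\ccon$ established in the paragraph introducing constants. Beyond recording this overlap, no induction on the derivation is needed, since each remaining value form admits essentially one introducing rule (or a matched pair separating the linear from the intuitionistic variant) and the conclusion of each such rule uniquely pins down the shape of $\vwty$, making the whole argument a single layer of inversion.
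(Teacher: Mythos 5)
Your proof is correct and is essentially the paper's argument: the paper proves this lemma simply ``by an inspection of the rules in Figure~\ref{figure:langz:typing_rules},'' and your case analysis on the form of $\val$ with inversion on the last rule is exactly that inspection, spelled out. The observations you flag (discarding the variable cases because both contexts are empty, and the convention that constant constructors return base types or base viewtypes) are the right details to record.
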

\begin{proof}
By an inspection of the rules in Figure~\ref{figure:langz:typing_rules}.
\end{proof}

We use $\esubst$ for substitution on variables $\xf$:
\[\begin{array}{rcl}
\esubst & ::= & [] \mid \esubst[x\mapsto\val] \mid \esubst[f\mapsto e] \\
\end{array}\]
For each $\esubst$, we define the multiset $\resof(\esubst)$ of resources
in $\esubst$ as follows:
$$\resof(\esubst)=\uplus_{\xf\in\dom(\esubst)}\resof(\esubst(\xf))$$
Given an expression $\exp$, we use $\exp[\esubst]$ for the result of
applying $\esubst$ to $\exp$, which is defined in a standard manner.
We write $(\Gamma_1;\Delta_1)\tpjg\esubst:(\Gamma_2;\Delta_2)$ to
mean that
\begin{itemize}
\item
$\dom(\esubst)=\dom(\Gamma_2)\cup\dom(\Delta_2)$, and
\item
$(\Gamma_1;\emptyset)\tpjg\esubst(\xf):\Gamma_2(\xf)$
is derivable for each $\xf\in\Gamma_2$, and
\item
there exists a linear expression context $\Delta_{1,x}$ for each
$x\in\dom(\Delta_2)$ such that
$(\Gamma_1;\Delta_{1,x})\tpjg\esubst(x):\Delta_2(x)$
is derivable, and
\item
$\Delta_1=\cup_{x\in\dom(\Delta_2)}\Delta_{1,x}$
\end{itemize}
The following lemma, which is often referred to as
{\em Substitution Lemma}, is needed to establish the soundness of the type
system of $\langz$:
\begin%
{lemma}\label{lemma:langz:substitution}
Assume $(\Gamma_1;\Delta_1)\tpjg\esubst:(\Gamma_2;\Delta_2)$ and
$(\Gamma_2;\Delta_2)\tpjg\exp:\vwty$.
Then $(\Gamma_1;\Delta_1)\tpjg\exp[\esubst]:\vwty$ is derivable
and $\resof(\exp[\esubst])=\resof(\exp)\uplus\resof(\esubst)$.
\end{lemma}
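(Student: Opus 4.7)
The plan is to proceed by structural induction on the derivation of $(\Gamma_2;\Delta_2)\tpjg\exp:\vwty$; since the typing rules in Figure~\ref{figure:langz:typing_rules} are syntax-directed, this coincides with induction on $\exp$. For each case I first derive the desired typing for $\exp[\esubst]$ and then verify the resource equation $\resof(\exp[\esubst])=\resof(\exp)\uplus\resof(\esubst)$. The two parts are intertwined because rules such as \textbf{(ty-if)} and \textbf{(ty-lam-i)} impose side conditions on $\resof(\cdot)$ that must be re-established after substitution.

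The base cases are \textbf{(ty-res)}, \textbf{(ty-unit)}, \textbf{(ty-var-i)}, and \textbf{(ty-var-l)}. In all four, $\Delta_2$ is either empty or the singleton $\{x:\vwty\}$. For \textbf{(ty-var-l)}, the conclusion and the resource equation read off directly from the third and fourth clauses in the definition of $(\Gamma_1;\Delta_1)\tpjg\esubst:(\Gamma_2;\Delta_2)$. For the other three, $\Delta_2=\emptyset$, and Proposition~\ref{proposition:value_type} forces $\resof(\esubst(\xf))=\emptyset$ for every $\xf\in\dom(\esubst)=\dom(\Gamma_2)$, which in turn gives $\resof(\esubst)=\emptyset$ and $\Delta_1=\bigcup_{x\in\dom(\Delta_2)}\Delta_{1,x}=\emptyset$; the required typing of $\esubst(\xf)$ in \textbf{(ty-var-i)} is supplied by the second clause of the definition.

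The inductive cases associated with multiplicative rules (\textbf{(ty-cst)}, \textbf{(ty-tup-i)}, \textbf{(ty-tup-l)}, \textbf{(ty-tup-l-elim)}, \textbf{(ty-app-i)}, \textbf{(ty-app-l)}, \textbf{(ty-if)}) are handled uniformly by splitting $\esubst$ according to the splitting of the linear context. If $\Delta_2=\Delta_{2,1},\Delta_{2,2}$, let $\esubst_k$ be the restriction of $\esubst$ to $\dom(\Gamma_2)\cup\dom(\Delta_{2,k})$ for $k\in\{1,2\}$; then $(\Gamma_1;\Delta_{1,k})\tpjg\esubst_k:(\Gamma_2;\Delta_{2,k})$ with $\Delta_{1,k}=\bigcup_{x\in\dom(\Delta_{2,k})}\Delta_{1,x}$, so that $\Delta_1=\Delta_{1,1},\Delta_{1,2}$. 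Applying the inductive hypothesis to each premise and re-applying the rule yields the typing of $\exp[\esubst]$. The resource equation reduces to showing $\resof(\esubst_1)\uplus\resof(\esubst_2)=\resof(\esubst)$, and here again Proposition~\ref{proposition:value_type} is decisive: the intuitionistic bindings in $\dom(\Gamma_2)$ are shared by $\esubst_1$ and $\esubst_2$, but each contributes the empty multiset, so no double-counting occurs. For \textbf{(ty-if)}, the inductive hypothesis on both branches combined with $\resof(\exp_1)=\resof(\exp_2)$ from the original derivation gives the refreshed side condition.

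The binder cases \textbf{(ty-lam-i)}, \textbf{(ty-lam-l)}, and \textbf{(ty-fix)} are handled under the Barendregt convention, assuming that the bound variable is fresh with respect to $\dom(\esubst)$ and to the free variables appearing in its codomain, so that $\esubst$ extends trivially through the binder. For \textbf{(ty-lam-i)}, the premise $\Delta_2=\emptyset$ together with Proposition~\ref{proposition:value_type} gives $\resof(\esubst)=\emptyset$, so the side condition $\resof(\exp)=\emptyset$ is preserved under substitution. I expect the main obstacle to be the resource bookkeeping in the multiplicative cases: the additive splitting of $\esubst$ only gives the right answer because intuitionistic bindings carry no resources, so Proposition~\ref{proposition:value_type} is used essentially at every inductive step rather than merely in the variable base case.
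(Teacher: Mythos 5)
Your proposal is correct and follows the same route as the paper, which simply proves the lemma by induction on the derivation of $(\Gamma_2;\Delta_2)\tpjg\exp:\vwty$; your case analysis, the splitting of $\esubst$ along the linear context in the multiplicative rules, and the appeal to Proposition~\ref{proposition:value_type} to show that intuitionistic bindings contribute no resources are exactly the details the paper leaves implicit. The only point worth tightening is that Proposition~\ref{proposition:value_type} is stated for closed values under $(\emptyset;\emptyset)$, so you need its routine generalization to values typed under $(\Gamma_1;\emptyset)$ (and to the expressions substituted for fix-variables) to justify $\resof(\esubst(\xf))=\emptyset$ on the intuitionistic part.
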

\begin{proof}
By induction on the derivation of $(\Gamma_2;\Delta_2)\tpjg\exp:\vwty$.
\end{proof}

\subsection
{Dynamic Semantics}
\label{section:langz:dynamics}
We present evaluation rules for $\langz$ in this section.
The evaluation contexts in $\langz$ are defined below:
\[\begin{array}{lrcl}
\mbox{eval.~ctx.}
& E & ::= & \kern144pt \\
\kern12pt
\hbox to 0pt
{$[] \mid \const(\vval,E,\vexp) \mid \dif(E,\exp_1,\exp_2) \mid$ \hss} \\
\kern12pt
\hbox to 0pt
{$\tuple{E,\exp} \mid \tuple{\val, E} \mid \letin{\tuple{x_1,x_2}=E}{\exp} \mid$} \\
\kern12pt
\hbox to 0pt
{$\dfst{E} \mid \dsnd{E} \mid \app{E}{\exp} \mid \app{\val}{E}$} \\
\end{array}\]
Given an evaluation context $E$ and an expression $\exp$, we use $E[\exp]$
for the expression obtained from replacing the only hole $[]$ in $E$ with
$\exp$.

\def\subst#1#2#3{#3[#2\mapsto #1]}
\begin%
{definition}
We define pure redexes and their reducts as follows.
\begin%
{itemize}
\item
$\dif(\ctrue,\exp_1,\exp_2)$ is a pure redex whose reduct is $\exp_1$.
\item
$\dif(\cfalse,\exp_1,\exp_2)$ is a pure redex whose reduct is $\exp_2$.
\item
$\letin{\tuple{x_1,x_2}=\tuple{\val_1,\val_2}}{\exp}$
is a pure redex whose reduct is $\subst{\val_1,\val_2}{x_1,x_2}{\exp}$.
\item
$\dfst{\tuple{\val_1,\val_2}}$ is a pure redex whose reduct is $\val_1$.
\item
$\dsnd{\tuple{\val_1,\val_2}}$ is a pure redex whose reduct is $\val_2$.
\item
$\app{\lam{x}{\exp}}{\val}$
is a pure redex whose reduct is $\subst{\val}{x}{\exp}$.
\item
$\fix{f}{\val}$ is a pure redex whose reduct is $\subst{\fix{f}{\val}}{f}{\val}$.
\end{itemize}
\end{definition}

\def\eval{\rightarrow}
\def\Eval{\Rightarrow}
\def\frandbit{\mbox{\it randbit}}
Evaluating calls to constant functions is of particular importance in
$\langz$. Assume that $\cfun$ is a constant function of arity $n$. The
expression $\cfun(v_1,\ldots,v_n)$ is an {\em ad-hoc} redex if $\cfun$ is
defined at $v_1,\ldots,v_n$, and any value of $\cfun(v_1,\ldots,v_n)$ is a
reduct of $\cfun(v_1,\ldots,v_n)$. For instance, $1+1$ is an ad hoc redex
and $2$ is its sole reduct. In contrast, $1+\ctrue$ is not a redex as it is
undefined. We can even have non-deterministic constant functions.  For
instance, we may assume that the ad-hoc redex $\frandbit()$ can evaluate to
both 0 and 1.

Let $\exp$ be a well-typed expression of the form
$\cfun(v_1,\ldots,v_n)$ and $\resof(\exp)\subseteq\rns$ holds for some
valid $\rns$ (that is, $\rns\in\RES$).  We always assume that there
exists a reduct $\val$ in $\langz$ for $\cfun(v_1,\ldots,v_n)$ such
that $(\rns\backslash\resof(e))\uplus\resof(v)\in\RES$. By doing so,
we are able to give a presentation with much less clutter.

\begin%
{definition} Given expressions $\exp_1$ and $\exp_2$, we write
$\exp_1\eval\exp_2$ if $\exp_1=E[\exp]$ and $\exp_2=E[\exp']$ for some
$E,\exp$ and $\exp'$ such that $\exp'$ is a reduct of $\exp$, and we may
say that $\exp_1$ evaluates or reduces to $\exp_2$ purely if $\exp$ is a
pure redex.
\end{definition}

Note that resources may be generated as well as consumed when ad-hoc
reductions occur. This is an essential issue of great importance in any
linear type system designed to support practical programming.

\begin%
{definition}
Given pools $\pool_1$ and $\pool_2$, the relation $\pool_1\eval\pool_2$ is
defined according to the following rules:\\[-18pt]
\begin{center}
\fontsize{8}{8}\selectfont
\[\begin{array}{c}
\infer[\mbox{(PR0)}]
      {\pool[\tid\mapsto\exp_1]\eval\pool[\tid\mapsto\exp_2]}
      {\exp_1\eval\exp_2} \\[2pt]
\infer[\mbox{(PR1)}]
      {\pool\eval\pool[\tid_0:=E[\tuple{}]][\tid\mapsto\app{\lam{x}{e}}{\tuple{}}]}
      {\pool(\tid_0)=E[\fthreadcreate(\lam{x}{e})]} \\[2pt]
\infer[\mbox{(PR2)}]
      {\pool[\tid\mapsto\tuple{}]\eval\pool}{\tid > 0}
\end{array}\]
\end{center}
\end{definition}
If a pool $\pool_1$ evaluates to another pool $\pool_2$ by the rule (PR0),
then one program in $\pool_1$ evaluates to its counterpart in $\pool_2$ and
the rest stay the same; if by the rule (PR1), then a fresh program is
created; if by the rule (PR2), then a program (that is not the main
program) is eliminated.

From this point on, we always (implicitly) assume that
$\resof(\pool)\in\RES$ holds whenever $\pool$ is well-typed.  The soundness
of the type system of $\langz$ rests upon the following two theorems:
\begin%
{theorem}
(Subject Reduction on Pools)
\label{theorem:langz:subject_reduction_on_pools}
Assume that $\tpjg\pool_1:\vwty$ is derivable and $\pool_1\eval\pool_2$
holds for some $\pool_2$ satisfying $\resof(\pool_2)\in\RES$. Then
$\tpjg\pool_2:\vwty$ is also derivable.
\end{theorem}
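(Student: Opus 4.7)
The plan is to proceed by case analysis on which of the three rules (PR0), (PR1), (PR2) witnesses $\pool_1\eval\pool_2$. In each case I must re-derive $\tpjg\pool_2:\vwty$ using rule \textbf{(ty-pool)}, so I will need to show that the main thread still has viewtype $\vwty$ and that every other thread is typeable at $\tunit$. The premise $\resof(\pool_2)\in\RES$ is given to us, so we never need to argue resource-validity directly.

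The case (PR0) is the substantive one and reduces to a single-thread subject reduction statement: if $(\emptyset;\emptyset)\tpjg\exp_1:\vwty'$ and $\exp_1\eval\exp_2$, then $(\emptyset;\emptyset)\tpjg\exp_2:\vwty'$. To prove this I would first establish a standard decomposition/replacement lemma for evaluation contexts $E$: whenever $(\emptyset;\emptyset)\tpjg E[\exp]:\vwty'$, there exist a linear context $\Delta$ and a viewtype $\vwty''$ such that $(\emptyset;\Delta)\tpjg\exp:\vwty''$ is derivable and, for any $\exp'$ with $(\emptyset;\Delta)\tpjg\exp':\vwty''$, we also have $(\emptyset;\emptyset)\tpjg E[\exp']:\vwty'$. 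This is shown by structural induction on $E$, using the typing rules for the various shapes of context. Given this, subject reduction reduces to showing that every redex and its reduct share a viewtype under the same linear context. For pure redexes this is an appeal to Lemma~\ref{lemma:langz:substitution} (Substitution) together with inversion on the typing derivation for each redex shape (if-true/false, pair-let, fst, snd, beta, fix-unrolling). For ad-hoc redexes $\cfun(v_1,\ldots,v_n)$ this is precisely the standing assumption made in the paragraph preceding the evaluation relation: a reduct $\val$ exists with a viewtype matching the c-type schema and with compatible resource footprint. Plugging the reduct back into $E$ via the replacement lemma then yields $(\emptyset;\emptyset)\tpjg\exp_2:\vwty'$. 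Applied at the thread position $\tid$, this gives $\tpjg\pool_2:\vwty$.

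For (PR1), I would inspect the derivation $(\emptyset;\emptyset)\tpjg E[\fthreadcreate(\lam{x}{e})]:\vwty'$ where $\vwty'$ is $\vwty$ if $\tid_0=0$ and $\tunit$ otherwise. Using the replacement lemma, $\fthreadcreate(\lam{x}{e})$ itself has type $\tunit$, and by inversion on \textbf{(ty-cst)} with the signature $\fthreadcreate:(\tunit\ltimp\tunit)\ctimp\tunit$, the argument $\lam{x}{e}$ has viewtype $\tunit\ltimp\tunit$. Substituting the value $\tuple{}$ of type $\tunit$ into the hole gives thread $\tid_0$ its required viewtype; meanwhile the freshly spawned thread $\app{\lam{x}{e}}{\tuple{}}$ is typeable at $\tunit$ by \textbf{(ty-app-l)} applied to that same derivation of $\lam{x}{e}:\tunit\ltimp\tunit$. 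Since the spawned thread id is not $0$, the side condition in \textbf{(ty-pool)} is satisfied. Case (PR2) is immediate: removing a thread whose body is $\tuple{}$ (which is typeable at $\tunit$ by \textbf{(ty-unit)}) leaves every surviving thread typeable exactly as before, so \textbf{(ty-pool)} applies unchanged, noting the side condition $\tid>0$ ensures we never eliminate the main thread.

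The main obstacle I anticipate is organizing the replacement lemma carefully enough that linear-context bookkeeping threads through each shape of evaluation context. Because the typing system is linear, splitting the context at the hole is not unique in general, so the lemma must be stated in a form that lets me rebuild the outer derivation using \emph{any} viewtype for the filler that matches the one extracted; getting this right for the pair-forming and application contexts is where linear-logic mechanics are usually most delicate. Everything else is routine once that lemma and the substitution lemma are in hand.
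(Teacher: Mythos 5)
Your proposal is correct and matches the paper's (one-sentence) proof in substance: the paper simply invokes structural induction on the typing derivation together with the Substitution Lemma, and your case analysis on (PR0)--(PR2) with an evaluation-context decomposition/replacement lemma, inversion on each redex shape, the standing assumption for ad-hoc redexes, and the Substitution Lemma is the standard unpacking of exactly that argument. The only organizational difference---inducting on the typing derivation versus isolating a replacement lemma for evaluation contexts---is immaterial here, and your worry about linear-context bookkeeping is easily discharged because evaluation contexts in $\langz$ never place the hole under a binder or inside a position constrained by a $\resof$ side condition (the branches of $\dif$ and the bodies of non-linear lambdas are not evaluation positions).
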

\begin%
{proof}
By structural induction on the derivation of $\tpjg\pool_1:\vwty$.
Note that Lemma~\ref{lemma:langz:substitution} is needed.
\end{proof}

\begin%
{theorem}
(Progress Property on Pools)
\label{theorem:langz:progress_on_pools}
Assume that $\tpjg\pool_1:\vwty$ is derivable. Then we have the following
possibilities:
\begin{itemize}
\item
$\pool_1$ is a singleton mapping $[0\mapsto\val]$ for some $\val$, or
\item
$\pool_1\eval\pool_2$ holds for some $\pool_2$ such that $\resof(\pool_2)\in\RES$.
\end{itemize}
\end{theorem}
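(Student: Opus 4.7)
The plan is to first establish single-thread progress and then lift it to the pool level, using Lemma~\ref{lemma:langz:canonical} (Canonical Forms) to pin down the shape of any closed value sitting at a head position.

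I would begin by proving the auxiliary statement that, if $(\emptyset;\emptyset)\tpjg\exp:\vwty$ is derivable, then one of three alternatives holds: either (i) $\exp$ is a value, or (ii) $\exp=E[\fthreadcreate(\lam{x}{\exp'})]$ for some evaluation context $E$ and some $\lam{x}{\exp'}$, or (iii) $\exp=E[\exp_0]$ where $\exp_0$ is a pure or ad-hoc redex, so that $\exp\eval\exp''$ for some reduct $\exp''$. The proof is by induction on the typing derivation. The nontrivial cases are the elimination forms. For $\app{\exp_1}{\exp_2}$, the induction hypothesis on $\exp_1$ either extends the evaluation context, exposes a redex, or produces a value; in the value case, Canonical Forms forces $\exp_1$ to be a $\lambda$-abstraction for both $\itimp$ and $\ltimp$ types, so the analysis recurses on $\exp_2$ and, once it too is a value, the outer application is a pure redex. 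For $\const(\vexp)$, Canonical Forms identifies when $\const$ is $\fthreadcreate$ (triggering alternative (ii)); otherwise alternative (iii) holds, with the standing assumption on ad-hoc redexes guaranteeing a reduct whose resources keep the total multiset inside $\RES$. The $\dif$, $\dfst{\cdot}$, $\dsnd{\cdot}$, and let-pair elim cases proceed analogously.

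Given this lemma, the pool-level argument splits on the structure of $\pool_1$. If $\dom(\pool_1)=\{0\}$, apply the lemma to $\pool_1(0)$: a value gives the first conclusion of the theorem, case (ii) fires rule (PR1), and case (iii) fires (PR0). If instead some $\tid>0$ lies in $\dom(\pool_1)$, rule $\mbox{\bf(ty-pool)}$ forces $(\emptyset;\emptyset)\tpjg\pool_1(\tid):\tunit$, and the lemma again applies; case (i) together with Canonical Forms at $\tunit$ yields $\pool_1(\tid)=\dunit$, enabling (PR2), while cases (ii) and (iii) once more trigger (PR1) and (PR0). The resource side condition $\resof(\pool_2)\in\RES$ is discharged separately: pure reductions preserve $\resof$ exactly (using the Substitution Lemma and Proposition~\ref{proposition:value_type} in the fix case, and the $\resof(\exp_1)=\resof(\exp_2)$ premise of $\mbox{\bf(ty-if)}$ in the branching case); (PR1) also leaves the total multiset unchanged, since the resources moved into the freshly spawned thread are exactly those removed from the parent; (PR2) shrinks the multiset so that the downward closure of $\RES$ suffices; and ad-hoc reductions are directly covered by the standing assumption stated just before the definition of $\eval$.

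The main obstacle is formulating the single-thread lemma carefully enough to track the interplay between linear and non-linear function types through Canonical Forms, and to carry the resource bookkeeping needed to invoke the ad-hoc-redex assumption at each $\cfun(\val_1,\ldots,\val_n)$ head position. Once that auxiliary statement is in place, the pool-level case split on $\dom(\pool_1)$ and the verification of the resource invariant are routine.
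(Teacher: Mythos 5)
Your proposal is correct and follows essentially the same route as the paper's own (quite terse) proof: use the Canonical Forms lemma to show each thread is either a value or an expression of the form $E[\exp]$ for some redex (or $\fthreadcreate$ call), then dispatch on the pool structure to fire (PR0), (PR1), or (PR2), with non-main-thread values forced to be $\dunit$. Your additional care with the resource side condition and the explicit single-thread progress lemma just spells out details the paper leaves implicit.
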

\begin%
{proof}
By structural induction on the derivation of $\tpjg\pool_1:\vwty$.  Note
that Lemma~\ref{lemma:langz:canonical} is needed. Essentially, we can
readily show that $\Pi_1(\tid)$ for any $\tid\in\dom(\Pi_1)$ is either a
value or of the form $E[\exp]$ for some evaluation context $E$ and redex
$\exp$.  If $\Pi_1(\tid)$ is a value for some $\tid>0$, then this value
must be $\tuple{}$.  So the rule $\mbox{(PR2)}$ can be used to reduce
$\Pi_1$.  If $\Pi_1(\tid)$ is of the form $E[e]$ for some redex $e$, then
the rule $\mbox{(PR0)}$ can be used to reduce $\Pi_1$.
\end{proof}

By combining Theorem~\ref{theorem:langz:subject_reduction_on_pools} and
Theorem~\ref{theorem:langz:progress_on_pools}, we immediately conclude that
the evaluation of a well-typed pool either leads to a pool that itself is a
singleton mapping of the form $[0\mapsto\val]$ for some value $\val$, or it
goes on forever.  In other words, $\langz$ is type-sound.

\section%
{Extending $\langz$ with Channels}
\label{section:langch}
There is no support for communication between threads in $\langz$, making
$\langz$ uninteresting as a multi-threaded language. We extend $\langz$ to
$\langch$ with support for synchronous communication channels in this
section. Supporting asynchronous communication channels is certainly
possible but would result in a more involved theoretical development.
We do support both synchronous and asynchronous session-typed communication
channels in practice, though. In order to assign types to channels, we
introduce session types as follows:
\[
\begin%
{array}{rcl}
\ST & ::= & \stnil \mid \chmsg{i,j}{\ST} \\
\end{array}
\]
An empty session is specified by $\stnil$.  Given integers $i$ and $j$
(representing roles), the precise meaning of the term $\stmsg(i,j)$ is
to be given later, which depends on the group of the roles implemented
by a party.  Intuitively speaking, this term refers to transferring a
message of some kind from a party implementing the role $i$ (and
possibly others) to another one implementing the role $j$ (and
possibly others).  Please notice that $\stmsg(i,j)$ is written instead
of $\stmsg(i,j,\vwty)$ for some viewtype $\vwty$. The omission of
$\vwty$ is solely for the purpose of a simplified presentation as the
primary focus is on communications between parties in a session
(rather than values transferred during communications).

\def\nrole{\mbox{\it nrole}}
Let us assume the availability of finite sets of integers for forming
types. As another step towards a simplified presentation, we fix a set
of roles $0,1,\ldots,\nrole-1$ for some natural number $\nrole\geq 2$
and require that the roles mentioned in every session type belong to
this set. Given a group $G$ of roles and a session type $\ST$, we can
form a linear base viewtype $\tchan(G,\ST)$ for channels that are
often referred to as $G$-channels; the party in a session that holds
a $G$-channel is supposed to implement all of the roles in $G$.

The function $\fchancreate$ for creating a channel is assigned the
following c-type schema:
\[
\begin%
{array}{rcl}
\fchancreate & : &
(\tchan(G,\ST)\ltimp\tunit)\ctimp\tchan(\setcomp{G},\ST) \\
\end{array}\]
where $G\neq\emptyset$ and $\setcomp{G}\neq\emptyset$ is assumed.
Given a linear function of the type $\tchan(G,\ST)\ltimp\tunit$ for
some session type $\ST$, $\fchancreate$ essentially creates two
properly connected channels of the types $\tchan(G,\ST)$ and
$\tchan(\setcomp{G},\ST)$, and then starts a thread for evaluating
the call that applies the function to the channel of the type
$\tchan(G,\ST)$ and then returns the other channel of the type
$\tchan(\setcomp{G},\ST)$.  The newly created two channels share
the same id.

The function for sending onto a channel is given the following type
schema:
$$\fsend~~:~~(\tchan(G,\chmsg{i,j}{\ST}))\ctimp\tchan(G,\ST)$$
where $i\in G$ and $j\not\in G$ is assumed. The function for
receiving from a channel is given the following type schema:
$$\frecv~~:~~(\tchan(G,\chmsg{i,j}{\ST}))\ctimp\tchan(G,\ST)$$
where $i\not\in G$ and $j\in G$ is assumed.
The function for skipping an internal or external message
is given the following type schema:
$$\fskip~~:~~(\tchan(G,\chmsg{i,j}{\ST}))\ctimp\tchan(G,\ST)$$
where either
$i\in G$ and $j\in G$ is assumed or $i\not\in G$ and $j\not\in G$ is
assumed.  The function for closing a channel is given the following
type schema:
$$\fclose~~:~~(\tchan(G,\ST))\ctimp\tunit$$
Note that $\fsend$ and $\frecv$ correspond to the functions
$\mbox{\tt channel\_send}$ and $\mbox{\tt channel\_recv}$ mentioned in
Section~\ref{section:introduction}, respectively, and $\fclose$
corresponds to $\mbox{\tt channel\_close}$.

\def\chpres{\mbox{ch}^{+}}
\def\chnres{\mbox{ch}^{-}}
In $\langch$, there are resource constants
$\chpres_n$ and $\chnres_n$ referring to positive and negative
channels, respectively, where $n$ ranges over natural numbers.  For
each $n$, $\chpres_n$ and $\chnres_n$ are dual to each other and their
channel ids are $n$.  We use $\chpcst$ and $\chncst$ for positive and
negative channels, and $\chcst$ for both.  If $\chpcst$ and $\chncst$
appear in the same context, it is assumed (unless specified otherwise)
that they refer to $\chpres_n$ and $\chnres_n$ for the same id $n$.

Given a group $G$ of roles, a $G$-channel is positive if $0\in G$
and it is negative if $0\not\in G$. Note that calling $\fchancreate$
creates a positive channel and a negative channel of the same id; one
is passed to a newly created thread while the other is returned to the
caller.

There are no new typing rules in $\langch$ over $\langz$.
Given $G$ and $\ST$, we say that the type $\tchan(G,\ST)$
matches the type $\tchan(\setcomp{G},\ST)$ and vice versa. In any
type derivation of $\pool:\vwty$ satisfying $\resof(\pool)\in\RES$,
the type assigned to a positive channel $\chpcst$ is always required
to match the one assigned to the corresponding negative channel
$\chncst$ of the same channel id.
For evaluating pools in $\langz$, we have the following additional rules in
$\langch$:\\[-18pt]
\begin{center}
\fontsize{8}{8}\selectfont
\[\begin{array}{c}
\infer[\mbox{(PR3)}]
      {\pool\eval\pool[\tid_0:=E[\chncst]][\tid\mapsto\app{\lam{x}{e}}{\chpcst}]}
      {\pool(\tid_0)=E[\fchancreate(\lam{x}{e})]} \\[2pt]
\infer[\mbox{(PR4-send)}]
      {\pool\eval\pool[\tid_1:=E_1[\chpcst]][\tid_2:=E_2[\chncst]]}
      {\pool(\tid_1)=E_1[\fsend(\chpcst)]&&\pool(\tid_2)=E_2[\frecv(\chncst)]} \\[2pt]
\infer[\mbox{(PR4-recv)}]
      {\pool\eval\pool[\tid_1:=E_1[\chpcst]][\tid_2:=E_2[\chncst]]}
      {\pool(\tid_1)=E_1[\frecv(\chpcst)]&&\pool(\tid_2)=E_2[\fsend(\chncst)]} \\[2pt]
\infer[\mbox{(PR4-skip)}]
      {\pool\eval\pool[\tid_1:=E_1[\chpcst]][\tid_2:=E_2[\chncst]]}
      {\pool(\tid_1)=E_1[\fskip(\chpcst)]&&\pool(\tid_2)=E_2[\fskip(\chncst)]} \\[2pt]
\infer[\mbox{(PR4-close)}]
      {\pool\eval\pool[\tid_1:=E_1[\dunit]][\tid_2:=E_2[\dunit]]}
      {\pool(\tid_1)=E_1[\fclose(\chpcst)]&&\pool(\tid_2)=E_2[\fclose(\chncst)]} \\[2pt]
\end{array}\]
\end{center}
For instance, the rule $\mbox{PR4-send}$ states:
If a program in a pool is of the form $E_1[\fsend(\chpcst)]$ and
another of the form $E_2[\frecv(\chncst)]$, then this pool can be
reduced to another pool by replacing these two programs with
$E_1[\chpcst]$ and $E_2[\chncst]$, respectively.

While Theorem~\ref{theorem:langz:subject_reduction_on_pools} (Subject
Reduction) can be readily established for $\langch$,
Theorem~\ref{theorem:langz:progress_on_pools} (Progress) requires some
special treatment due to the presence of session-typed primitive functions
$\fchancreate$, $\fsend$, $\frecv$, $\fskip$, and $\fclose$.

A partial (ad-hoc) redex in $\langch$ is of one of the following
forms: $\fsend(\chcst)$, $\frecv(\chcst)$, $\fskip(\chcst)$, and
$\fclose(\chcst)$.  We say that $\fsend(\chpcst)$ and
$\frecv(\chncst)$ match, and $\frecv(\chpcst)$ and $\fsend(\chncst)$
match, and $\fskip(\chpcst)$ and $\fskip(\chncst)$ match, and
$\fclose(\chpcst)$ and $\fclose(\chncst)$ match.  We can immediately
prove in $\langch$ that each well-typed program is either a value or
of the form $E[\exp]$ for some evaluation context $E$ and expression
$\exp$ that is either a redex or a partial redex. We refer to an
expression as a {\it blocked} one if it is of the form $E[\exp]$ for
some partial redex $\exp$. We say two blocked expressions
$E_1[\exp_1]$ and $E_2[\exp_2]$ match if $\exp_1$ and $\exp_2$ are
matching partial redexes. Clearly, a pool containing two matching
blocked expressions can be reduced according to one of the rules
$\mbox{PR4-send}$, $\mbox{PR4-recv}$, $\mbox{PR4-skip}$, and
$\mbox{PR4-close}$.

Intuitively, a pool $\Pi$ is deadlocked if $\Pi(tid)$ for
$\tid\in\dom(\Pi)$ are all blocked expressions but there are no matching
ones among them, or if $\Pi(0)$ is a value and $\Pi(tid)$ for positive
$\tid\in\dom(\Pi)$ are all blocked expressions but there are no matching
ones among them. The following lemma states that a well-typed pool in
$\langch$ can never be deadlocked:

\begin%
{lemma}
(Deadlock-Freedom)
\label{lemma:langch:deadlock-freedom}
Let $\Pi$ be a well-typed pool in $\langch$ such that $\Pi(0)$ is either a
value containing no channels or a blocked expression and $\Pi(tid)$ for
each positive $\tid\in\dom(\Pi)$ is a blocked expression. If $\Pi$ is
obtained from evaluating an initial pool containing no channels, then there
exist two thread ids $\tid_1$ and $\tid_2$ such that $\Pi(\tid_1)$ and
$\Pi(\tid_2)$ are matching blocked expressions.
\end{lemma}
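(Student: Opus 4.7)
The plan is to exploit a tree-like structure on the channel connectivity among threads and then apply a simple counting argument to locate matching blocked expressions. I introduce the \emph{channel graph} $H(\Pi)$ of a pool $\Pi$: its vertices are the thread ids in $\dom(\Pi)$, and for each channel id $n$ such that the resource constants $\chpres_n$ and $\chnres_n$ occur in $\Pi$ at threads $\tid_1$ and $\tid_2$ respectively, the graph contains an edge between $\tid_1$ and $\tid_2$.

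First I would establish two invariants by induction on the length of the reduction sequence from the initial (channel-free) pool. The first invariant is that the two ends of every live channel reside in two distinct threads: this is vacuous at the start, is introduced only by rule (PR3), which places $\chpres_n$ and $\chnres_n$ into the caller and into the freshly spawned child respectively, and is preserved by (PR1) and (PR3) since a captured closure can only carry ends drawn from the caller's own context. The second invariant is that $H(\Pi)$ is always a forest. The rules (PR0), (PR2) and (PR4-send/recv/skip) leave $H(\Pi)$ unchanged; (PR4-close) merely deletes an edge; and in (PR1) and (PR3) a new vertex $T_{\rm new}$ is spawned with some subset of edges incident to the caller $T_0$ migrating from $T_0$ to $T_{\rm new}$ via the captured closure. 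In (PR3) an additional edge between $T_0$ and $T_{\rm new}$ is created for the freshly allocated channel pair, which reconnects $T_0$ with the migrated neighborhood and preserves the surrounding tree; in (PR1) no new edge is added, so the containing tree may split into two. Either way $H(\Pi)$ remains a forest.

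Next I introduce a \emph{waits-for} function on blocked threads: each blocked thread $T$ is blocked on a partial redex referencing exactly one channel end $\chcst$, and its waits-for target is the unique thread $T'$ that holds the dual end; by the first invariant $T\neq T'$, and this arrow coincides with a channel edge in $H(\Pi)$. Under the hypothesis of the lemma every non-isolated vertex of $H(\Pi)$ is blocked, since a blocked thread owns the channel on which it is stuck, while the only thread allowed to be a non-blocked value carries no channels at all. Pick any connected component of $H(\Pi)$ with $m\geq 2$ vertices; as a tree it has $m-1$ edges, yet it carries $m$ waits-for out-arrows each of which coincides with one of those edges. By pigeonhole, some channel edge carries two out-arrows; since an edge has only two endpoints, those two arrows must be exactly the out-arrows of the two endpoints, each pointing at the other. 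Hence there are two thread ids $\tid_1,\tid_2$ blocked on dual ends of the same channel.

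Finally I invoke the type-matching invariant on positive and negative channels recalled earlier in the section: if $\Pi(\tid_1)$ is blocked on $\chcst$ with current type $\tchan(G,\chmsg{i,j}{\ST})$, then $\chcst'$ held by $\Pi(\tid_2)$ has the dual type $\tchan(\setcomp{G},\chmsg{i,j}{\ST})$, and the side conditions on the c-type schemas of $\fsend$, $\frecv$, $\fskip$ and $\fclose$ force the partial redex at $\tid_2$ to be exactly the complementary operation on $\chcst'$. Thus $\Pi(\tid_1)$ and $\Pi(\tid_2)$ are matching blocked expressions. I expect the main technical obstacle to lie in the forest invariant for the (PR3) step, where edge migration and edge creation happen simultaneously and one must check that the fresh $T_0$--$T_{\rm new}$ edge does not close a cycle together with the migrated edges; the counting and type-matching steps are routine once that invariant is in place.
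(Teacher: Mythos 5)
Your proof is correct, and it is in essence the paper's own argument translated into graph-theoretic language, though the translation is genuine enough to be worth spelling out. Where you work with the channel graph $H(\Pi)$ and prove it is a forest, the paper works with the collection $\fMCHS(\Pi)$ of per-thread channel sets and proves it remains \emph{DF-reducible} under evaluation (Lemma~\ref{lemma:DFreducibility:4}); for a regular collection these are the same property, since DF-reduction is exactly edge contraction on your graph and a collection is DF-reducible precisely when every contraction sequence terminates with all sets empty, i.e.\ when the graph is acyclic with no self-loops. Your per-component pigeonhole ($m$ waits-for arrows on $m-1$ tree edges) is the localized form of the paper's Lemma~\ref{lemma:DFreducibility:3} ($n$ channel sets containing $n$ channel pairs cannot be DF-reducible), and your final type-matching step is the same appeal to the matching of $\tchan(G,\ST)$ against $\tchan(\setcomp{G},\ST)$ and the side conditions on $\fsend$, $\frecv$, $\fskip$, $\fclose$ that the paper makes. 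What your formulation buys is transparency: the forest invariant and the counting are immediate to visualize, and the vertex-split analysis of (PR3) is cleaner than chasing the rewriting relation $\DFred$. What the paper's formulation buys is robustness in exactly the one place where your argument is too quick: you assert that (PR4-send) and (PR4-recv) leave $H(\Pi)$ unchanged, which holds only because the type schemas in this paper strip the payload from $\stmsg(i,j)$; in the intended general setting a sent value may itself carry channel ends, so a send migrates edges between two \emph{existing} vertices, and preserving the forest invariant there requires the analogue of Proposition~\ref{prop:DFreducibility:2} (remove the edge $\chpcst$--$\chncst$ being communicated on, observe the tree splits into two components with the sender and receiver on opposite sides, and check that the migrated edges reattach without closing a cycle). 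The paper explicitly flags this as the essential case; your invariant still holds there, but you would need to argue it rather than dismiss the rule as a no-op. Minor shared loose end: both proofs implicitly assume the pool contains at least one blocked thread, since a singleton pool $[0\mapsto\val]$ with $\val$ channel-free satisfies the hypotheses but not the conclusion.
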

Note that it is entirely possible to encounter a scenario where the main
thread in a pool returns a value containing a channel while another thread
is waiting for something to be sent on the channel. Technically, we do not
classify this scenario as a deadlocked one. There are many forms of values
that contain channels. For instance, such a value can be a channel itself,
or a closure-function containing a channel in its environment, or a compound
value like a tuple that contains a channel as one part of it, etc. Clearly,
any value containing a channel can only be assigned a true viewtype.

\def\MCH{M}
\def\MCHS{{\cal M}}
\def\DFred{\leadsto}
As a channel can be sent from one thread to another one,
establishing Lemma~\ref{lemma:langch:deadlock-freedom} is
conceptually challenging. The following technical approach to
addressing the challenge is adopted from some existing work on
dyadic sessions types (for i-sessions)~\cite{MTLC-i-sessiontype}.
One may want skip the rest of this section when reading the
paper for the first time.

Let us use $\MCH$ for sets of (positive and negative) channels and $\MCHS$
for a finite non-empty collection (that is, multiset) of such sets. We say
that $\MCHS$ is {\it regular} if the sets in $\MCHS$ are pairwise disjoint
and each pair of channels $\chpcst$ and $\chncst$ are either both included
in the multiset union $\biguplus(\MCHS)$ of all the sets in $\MCHS$ or both
excluded from it. Of course, $\biguplus(\MCHS)$ is the same as the set
union $\bigcup(\MCHS)$ as the sets in $\MCHS$ are pairwise disjoint.

Let $\MCHS$ be a regular collection of channel sets. We say that $\MCHS$
{\it DF-reduces} to $\MCHS'$ via $\chpcst$ if there exist $\MCH_1$ and
$\MCH_2$ in $\MCHS$ such that $\chpcst\in\MCH_1$ and $\chncst\in\MCH_2$ and
$\MCHS'=(\MCHS\backslash\{\MCH_1,\MCH_2\})\cup\{\MCH_{12}\}$, where
$\MCH_{12}=(\MCH_1\cup\MCH_2)\backslash\{\chpcst,\chncst\}$.  We say that
$\MCHS$ DF-reduces to $\MCHS'$ if $\MCHS$ DF-reduces to $\MCHS'$ via some
$\chpcst$. We may write $\MCHS\DFred\MCHS'$ to mean that $\MCHS$
DF-reduces to $\MCHS$. We say that $\MCHS$ is DF-normal if there is no $\MCHS'$
such that $\MCHS\DFred\MCHS'$ holds.

\begin%
{proposition}
Let $\MCHS$ be a regular collection of channel sets.  If $\MCHS$ is
DF-normal, then each set in $\MCHS$ consists of an indefinite number of
channel pairs $\chpcst$ and $\chncst$. In other words, for each $\MCH$
in a DF-normal $\MCHS$, a channel $\chpcst$ is in $\MCH$ if and only if
its dual $\chncst$ is also in $\MCH$.
\end{proposition}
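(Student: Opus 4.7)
The plan is to argue by contraposition: assume some $\MCH\in\MCHS$ violates the claimed biconditional, and exhibit an explicit DF-reduction of $\MCHS$, contradicting DF-normality.

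First, without loss of generality suppose there is some $\MCH\in\MCHS$ and some $\chpcst$ such that $\chpcst\in\MCH$ but $\chncst\notin\MCH$; the symmetric case where $\chncst\in\MCH$ and $\chpcst\notin\MCH$ is handled by an identical argument with the roles of the positive and negative channels swapped. By the second clause of regularity, the pair $\chpcst,\chncst$ is either both included in $\biguplus(\MCHS)$ or both excluded from it; since $\chpcst\in\MCH\subseteq\biguplus(\MCHS)$, both $\chpcst$ and $\chncst$ must lie in $\biguplus(\MCHS)$. Hence some set $\MCH'\in\MCHS$ contains $\chncst$, and $\MCH'$ must be a multiset occurrence distinct from $\MCH$; otherwise $\MCH$ would itself contain $\chncst$, contradicting the hypothesis.

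Next, I would apply the definition of DF-reduction directly with $\MCH_1=\MCH$ and $\MCH_2=\MCH'$: these are two distinct members of $\MCHS$ satisfying $\chpcst\in\MCH_1$ and $\chncst\in\MCH_2$, so $\MCHS\DFred(\MCHS\backslash\{\MCH_1,\MCH_2\})\cup\{\MCH_{12}\}$ where $\MCH_{12}=(\MCH_1\cup\MCH_2)\backslash\{\chpcst,\chncst\}$. This is precisely a DF-reduction of $\MCHS$ via $\chpcst$, contradicting the assumption that $\MCHS$ is DF-normal.

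The proof is essentially a direct unwinding of the two regularity conditions, so I do not foresee any substantial obstacle. The only point requiring a bit of care is the interplay between the multiset semantics of $\MCHS$ and the notion of ``distinct elements'' used in the definition of DF-reduction: reading ``there exist $\MCH_1$ and $\MCH_2$ in $\MCHS$'' as asserting two distinct multiset occurrences makes the above invocation of the reduction rule unambiguously valid. Once this convention is fixed, the symmetric case and the overall contrapositive conclusion follow without further work.
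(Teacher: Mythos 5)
Your proposal is correct and is essentially the paper's own argument spelled out in full: the paper simply asserts that the proposition ``immediately follows from the definition of DF-reduction,'' and your contrapositive unwinding of regularity plus the reduction rule is exactly that observation made explicit (including the sensible convention that the two sets in a DF-reduction are distinct occurrences, without which the proposition would be vacuous).
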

\begin
{proof}
The proposition immediately follows from the definition of DF-reduction
$\leadsto$.
\end{proof}

\begin%
{definition}
\label{def:DFreducibility}
A regular collection $\MCHS$ of channel sets is DF-reducible if either (1)
each set in $\MCHS$ is empty or (2) $\MCHS$ is not DF-normal and $\MCHS'$
is DF-reducible whenever $\MCHS\DFred\MCHS'$ holds.
\end{definition}
We say that a channel set $\MCH$ is self-looping if it contains both
$\chpcst$ and $\chncst$ for some $\chpcst$.  Obviously, a regular collection
$\MCHS$ of channel sets is not DF-reducible if there is a self-looping $\MCH$
in $\MCHS$.

\begin%
{proposition}
\label{prop:DFreducibility:0}
Let $\MCHS$ be a regular collection of channel sets.  If
$\MCHS$ is DF-reducible and $\MCHS'=\MCHS\backslash\{\emptyset\}$,
then $\MCHS'$ is also DF-reducible.
\end{proposition}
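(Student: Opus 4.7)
The plan is to prove the proposition by strong induction on the total number $|\biguplus(\MCHS)|$ of channels occurring in $\MCHS$. The driving observation is that a DF-reduction always consumes a matched pair $\chpcst,\chncst$ housed in two sets $\MCH_1,\MCH_2 \in \MCHS$, both of which are necessarily non-empty; hence the empty sets in $\MCHS$ are entirely inert for DF-reduction. To keep the induction symmetric it is convenient to strengthen the statement and prove instead that $\MCHS$ is DF-reducible if and only if $\MCHS\backslash\{\emptyset\}$ is DF-reducible; the proposition then follows as the forward direction.

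For the base case $|\biguplus(\MCHS)|=0$, every set in $\MCHS$ is empty and clause (1) of Definition \ref{def:DFreducibility} applies to both $\MCHS$ and $\MCHS\backslash\{\emptyset\}$ (vacuously for the latter). For the inductive step, I first note that $\MCHS$ is DF-normal iff $\MCHS\backslash\{\emptyset\}$ is, because the witnesses $\MCH_1,\MCH_2$ to non-normality never involve empty sets. If both are DF-normal, either each of their sets is empty (so both satisfy clause (1)) or a non-empty set remains in each (so both fail clause (1) and are not DF-reducible); the equivalence is immediate.

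The interesting case is when both are non-DF-normal. Given any reduction $\MCHS\backslash\{\emptyset\}\DFred\MCHS'''$ via $\chpcst$ with witnesses $\MCH_1,\MCH_2$, those same non-empty sets lie in $\MCHS$, inducing a parallel reduction $\MCHS\DFred\MCHS''$ via the same $\chpcst,\MCH_1,\MCH_2$. A short multiset calculation shows $\MCHS''\backslash\{\emptyset\}=\MCHS'''\backslash\{\emptyset\}$: the two differ at most in whether the empty sets inherited from $\MCHS$ and the possibly-empty $\MCH_{12}$ produced by the reduction are retained, and stripping empty sets erases these differences. Since $|\biguplus(\MCHS'')|=|\biguplus(\MCHS''')|=|\biguplus(\MCHS)|-2$, the inductive hypothesis applies to both, giving $\MCHS''$ DF-reducible iff $\MCHS''\backslash\{\emptyset\}=\MCHS'''\backslash\{\emptyset\}$ DF-reducible iff $\MCHS'''$ DF-reducible. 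Quantifying over reductions, clause (2) holds for $\MCHS$ iff it holds for $\MCHS\backslash\{\emptyset\}$, which completes the induction.

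The main obstacle is the bookkeeping needed to verify the identity $\MCHS''\backslash\{\emptyset\}=\MCHS'''\backslash\{\emptyset\}$; the argument splits on whether $\MCH_{12}=(\MCH_1\cup\MCH_2)\backslash\{\chpcst,\chncst\}$ is empty, and in each subcase the equality reduces to identifying the common multiset $\MCHS\backslash\{\emptyset,\MCH_1,\MCH_2\}$. Beyond that, the induction is mechanical, and reading $\MCHS$ as a multiset (rather than a set) causes no difficulty provided the DF-reduction rule is interpreted as removing exactly one occurrence of each of $\MCH_1,\MCH_2$ and adding one occurrence of $\MCH_{12}$.
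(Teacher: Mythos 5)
Your proof is correct. The paper itself offers nothing to compare against here: its entire proof of this proposition is the single word ``Straightforwardly,'' so your argument is a legitimate (and considerably more careful) filling-in of that gap. The two points that genuinely need checking are exactly the ones you isolate: (i) empty sets can never serve as the witnesses $\MCH_1,\MCH_2$ of a DF-reduction, so $\MCHS$ and $\MCHS\backslash\{\emptyset\}$ have the same reductions up to the presence of inert empty sets, and (ii) the reducts agree after stripping empty sets, which lets the induction on $|\biguplus(\MCHS)|$ (well-founded because each reduction consumes one channel pair) close via the strengthened ``if and only if'' form. The only cosmetic caveat is the degenerate case where every set in $\MCHS$ is empty, so that $\MCHS\backslash\{\emptyset\}$ is the empty collection and strictly speaking falls outside the paper's definition of a (non-empty) collection; you dispose of it vacuously, which is the only sensible reading.
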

\begin
{proof}
Straightforwardly.
\end{proof}

\begin%
{proposition}
\label{prop:DFreducibility:1}
Let $\MCHS$ be a regular collection of channel sets.  If
$\MCHS\DFred\MCHS'$ and $\MCHS'$ is DF-reducible, then $\MCHS$ is also
DF-reducible.
\end{proposition}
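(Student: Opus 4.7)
The plan is to argue by strong induction on the total number of channel occurrences in $\MCHS$, a quantity that DF-reduction strictly decreases. The base case is vacuous (if there are no channels at all, no DF-reduction is available and the hypothesis $\MCHS \DFred \MCHS'$ does not apply), so I would go straight to the inductive step.

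First I would dispatch the definitional setup: since $\MCHS \DFred \MCHS'$ already witnesses that $\MCHS$ is not DF-normal, Definition~\ref{def:DFreducibility} reduces the goal to showing that every one-step reduct $\MCHS''$ of $\MCHS$ is DF-reducible. Pick such an arbitrary $\MCHS''$; if $\MCHS'' = \MCHS'$ the conclusion is immediate, so assume otherwise. Then the two reductions use distinct channels, say $\MCHS \DFred \MCHS'$ via $\chpcst_1$ merging occurrences $\MCH_a, \MCH_b$ of $\MCHS$, and $\MCHS \DFred \MCHS''$ via $\chpcst_2 \neq \chpcst_1$ merging occurrences $\MCH_c, \MCH_d$.

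The key step is a diamond-style lemma: produce $\MCHS'''$ with $\MCHS' \DFred \MCHS'''$ and $\MCHS'' \DFred \MCHS'''$. I would carry out a short case analysis on how the pairs $\{\MCH_a,\MCH_b\}$ and $\{\MCH_c,\MCH_d\}$ overlap as multiset occurrences. When they share at most one element, the channels $\chpcst_2,\chncst_2$ survive the first merge and still lie in distinct sets of $\MCHS'$ (and symmetrically $\chpcst_1,\chncst_1$ in $\MCHS''$), so both routes lead to the same multiset obtained by merging every involved set and deleting both channel pairs. The delicate case is when both reductions collapse exactly the same pair of sets, i.e.\ $\{\MCH_a,\MCH_b\} = \{\MCH_c,\MCH_d\}$: here the merged set in $\MCHS'$ still contains both $\chpcst_2$ and $\chncst_2$, hence is self-looping, and by the observation recorded just before Proposition~\ref{prop:DFreducibility:0} such an $\MCHS'$ cannot be DF-reducible---a self-looping set is preserved by every further DF-reduction and can never be emptied. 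This contradicts the hypothesis and rules the case out.

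With the diamond in hand the argument finishes cleanly: from $\MCHS' \DFred \MCHS'''$ and the assumed DF-reducibility of $\MCHS'$, clause~(2) of Definition~\ref{def:DFreducibility} gives that $\MCHS'''$ is DF-reducible; and since $\MCHS''$ has strictly fewer channels than $\MCHS$, the induction hypothesis applied to $\MCHS'' \DFred \MCHS'''$ yields that $\MCHS''$ is DF-reducible. As $\MCHS''$ was an arbitrary one-step reduct of $\MCHS$, we conclude that $\MCHS$ is DF-reducible. I expect the diamond lemma---and in particular its reliance on the DF-reducibility of $\MCHS'$ to rule out the self-looping collapse---to be the main technical hurdle; the rest is routine bookkeeping on multisets.
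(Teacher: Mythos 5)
Your proof is correct and follows essentially the same route as the paper's: induction on a strictly decreasing measure, plus a diamond property showing that an arbitrary one-step reduct $\MCHS''$ of $\MCHS$ and the given reduct $\MCHS'$ share a common one-step reduct, to which clause~(2) of Definition~\ref{def:DFreducibility} and the induction hypothesis are then applied (your measure, total channel occurrences rather than the number of sets in the collection, is an immaterial difference). The one point where you go beyond the paper is the degenerate case in which both reductions merge the same two sets: the paper dismisses the commutation as ``readily verified,'' whereas you correctly note that the diamond actually fails there and must instead be excluded because the merge would leave a self-looping set in $\MCHS'$, contradicting its assumed DF-reducibility --- a detail the paper's proof silently relies on.
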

\begin
{proof}
Clearly, $\MCHS\DFred\MCHS'$ via some $\chpcst$.  Assume
$\MCHS\DFred\MCHS_1$ via $\chpcst_1$ for some $\MCHS_1$ and $\chpcst_1$.
If $\chpcst$ and $\chpcst_1$ are the same, then $\MCHS_1$ is DF-reducible
as it is the same as $\MCHS'$. Otherwise, it can be readily verified that
there exists $\MCHS'_1$ such that $\MCHS_1\DFred\MCHS'_1$ via $\chpcst$ and
$\MCHS'\DFred\MCHS'_1$ via $\chpcst_1$. Clearly, the latter implies
$\MCHS'_1$ being DF-reducible.  Note that the size of $\MCHS_1$ is strictly
less than that of $\MCHS$.  By induction hypothesis on $\MCHS_1$, we have
$\MCHS_1$ being DF-reducible.  By definition, $\MCHS$ is DF-reducible.
\end{proof}

\begin%
{proposition}
\label{prop:DFreducibility:2}
Let $\MCHS$ be a regular collection of channel sets that is DF-reducible.
If $\MCH_1$ and $\MCH_2$ in $\MCHS$ contain $\chpcst$ and $\chncst$,
respectively, then
$\MCHS'=(\MCHS\backslash\{\MCH_1,\MCH_2\})\cup\{\MCH'_1,\MCH'_2\}$ is also
DF-reducible, where $\MCH'_1=\MCH_1\backslash\{\chpcst\}$ and
$\MCH'_2=\MCH_2\backslash\{\chncst\}$.
\end{proposition}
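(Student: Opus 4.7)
The strategy is induction on the total channel count $|\biguplus(\MCHS)|$. The base case $|\biguplus(\MCHS)|=2$ forces every set of $\MCHS'$ to be empty, so $\MCHS'$ is DF-reducible by clause~(1) of Definition~\ref{def:DFreducibility}. For the inductive step, the key initial observation is that $\MCHS \DFred \MCHS'''$ via $\chpcst$, where $\MCHS''' = (\MCHS \backslash \{\MCH_1,\MCH_2\}) \cup \{\MCH'_1 \cup \MCH'_2\}$; hence $\MCHS'''$ is itself DF-reducible. If $\biguplus(\MCHS')=\emptyset$, we are done; otherwise, to apply clause~(2) of Definition~\ref{def:DFreducibility} to $\MCHS'$, I must show both that $\MCHS'$ is not DF-normal and that every one-step DF-reduct of $\MCHS'$ is DF-reducible.

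The first condition is easy. Since $\MCHS'''$ is DF-reducible with nonempty union, it is not DF-normal, so some pair $\chpcst_1,\chncst_1$ lies in two distinct sets of $\MCHS'''$. The only set of $\MCHS'''$ that is not a set of $\MCHS'$ is $\MCH'_1\cup\MCH'_2$; splitting it back into the disjoint pieces $\MCH'_1,\MCH'_2$ can only refine the partition, so $\chpcst_1$ and $\chncst_1$ still sit in distinct sets of $\MCHS'$, giving an available reduction.

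The substantive step is the second condition. Suppose $\MCHS' \DFred \MCHS_*$ via some $\chpcst_1$ with $\chpcst_1\in\MCH_a$ and $\chncst_1\in\MCH_b$ for distinct $\MCH_a,\MCH_b\in\MCHS'$. The dangerous configuration is $\{\MCH_a,\MCH_b\}=\{\MCH'_1,\MCH'_2\}$; but then reducing $\MCHS$ itself via $\chpcst_1$ would merge $\MCH_1$ and $\MCH_2$ into one set still containing both $\chpcst$ and $\chncst$, a self-loop, contradicting DF-reducibility of $\MCHS$. So this case is ruled out. In every remaining configuration, at least one of $\MCH_a,\MCH_b$ is an unchanged set from $\MCHS\backslash\{\MCH_1,\MCH_2\}$, and the reduction lifts to $\MCHS \DFred \tilde\MCHS$ via $\chpcst_1$ (using $\MCH_1$ in place of $\MCH'_1$, or $\MCH_2$ in place of $\MCH'_2$, whenever one of them appears among $\MCH_a,\MCH_b$). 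Then $\tilde\MCHS$ is DF-reducible by hypothesis, $\chpcst$ and $\chncst$ still occupy two distinct sets of it, and $|\biguplus(\tilde\MCHS)|<|\biguplus(\MCHS)|$, so the induction hypothesis applied to $\tilde\MCHS$ with respect to $\chpcst,\chncst$ yields a DF-reducible split collection. A direct set-manipulation check confirms that it coincides with $\MCHS_*$.

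The main obstacle is recognizing that the self-loop case is automatically excluded by DF-reducibility of $\MCHS$; once that observation is made, the rest is routine bookkeeping verifying that splitting off the pair $\chpcst,\chncst$ commutes, at the level of collections, with DF-reduction along any other channel pair.
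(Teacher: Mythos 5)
Your proof is correct and follows the same route as the paper, whose entire argument is the one-line assertion that the result ``follows from a straightforward induction on the size of the set union $\bigcup(\MCHS)$.'' You have supplied the details of exactly that induction --- in particular the key point that the configuration $\{\MCH_a,\MCH_b\}=\{\MCH'_1,\MCH'_2\}$ is impossible because it would force a one-step reduct of $\MCHS$ to contain a self-looping set, and the commutation of the pair-removal with any other DF-reduction step --- and the argument is sound.
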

\begin%
{proof}
The proposition follows from a straightforward induction on the size of
the set union $\bigcup(\MCHS)$.
\end{proof}
\begin%
{lemma}
\label{lemma:DFreducibility:3}
Let $\MCHS$ be a regular collection of $n$ channel sets
$\MCH_1,\ldots,\MCH_n$ for some $n\geq 1$.  If the union
$\bigcup(\MCHS)=\MCH_1\cup\ldots\cup\MCH_n$ contains at least $n$ channel
pairs $(\chpcst_1,\chncst_1),\ldots,(\chpcst_n,\chncst_n)$, then $\MCHS$ is
not DF-reducible.
\end{lemma}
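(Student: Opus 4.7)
The plan is to proceed by induction on $n$, leaning on the observation recorded immediately after Definition~\ref{def:DFreducibility} that a regular collection containing a self-looping set cannot be DF-reducible.

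For the base case $n=1$, the single set $\MCH_1$ in $\MCHS$ contains the pair $(\chpcst_1,\chncst_1)$, so $\MCH_1$ is self-looping and therefore $\MCHS$ is not DF-reducible.

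For the inductive step, fix $n \geq 2$ and assume the result for regular collections of $n-1$ sets. Suppose, for contradiction, that $\MCHS$ has $n$ sets, that $\bigcup(\MCHS)$ contains at least $n$ channel pairs, and that $\MCHS$ is nevertheless DF-reducible. Since at least one channel pair inhabits $\bigcup(\MCHS)$, some set in $\MCHS$ is non-empty, so clause~(1) of Definition~\ref{def:DFreducibility} fails and clause~(2) must hold: $\MCHS$ is not DF-normal, and every $\MCHS'$ with $\MCHS \DFred \MCHS'$ is itself DF-reducible. Pick any such reduction $\MCHS \DFred \MCHS'$ via some channel $\chpcst$. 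Because $\MCHS$ is DF-reducible, no set in $\MCHS$ can be self-looping (again by the observation), and hence the set $\MCH_1$ containing $\chpcst$ and the set $\MCH_2$ containing $\chncst$ are distinct. The reduction therefore replaces two sets by one, producing a collection $\MCHS'$ of exactly $n-1$ sets, while $\bigcup(\MCHS')$ differs from $\bigcup(\MCHS)$ only by the removal of the single pair $(\chpcst,\chncst)$ and so still contains at least $n-1$ channel pairs. The induction hypothesis applied to $\MCHS'$ now yields that $\MCHS'$ is \emph{not} DF-reducible, contradicting what clause~(2) gave us. Hence $\MCHS$ cannot be DF-reducible after all.

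The only delicate point is ensuring that a DF-reduction of a DF-reducible collection really shrinks the number of sets by exactly one; this is precisely where the self-looping observation does the work, ruling out the degenerate situation in which $\chpcst$ and $\chncst$ would inhabit the same set of $\MCHS$. Everything else is routine bookkeeping on set-counts and pair-counts, and the argument pairs naturally with Propositions~\ref{prop:DFreducibility:0}--\ref{prop:DFreducibility:2} already established.
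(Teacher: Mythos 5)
Your proof is correct and takes essentially the same route as the paper's: induction on $n$, with the self-looping observation settling the base case and guaranteeing that a DF-reduction step merges two distinct sets, so the resulting collection has $n-1$ sets while retaining at least $n-1$ channel pairs, and the induction hypothesis finishes. The only cosmetic difference is that you argue by contradiction from an arbitrary available reduction, whereas the paper directly exhibits one reduction (via the pair $(\chpcst_1,\chncst_1)$) whose reduct is not DF-reducible and concludes from the definition.
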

\begin%
{proof}
By induction on $n$.  If $n=1$, then $\MCHS$ is not DF-reducible as
$\MCH_1$ is self-looping.  Assume $n > 1$.  If either $\MCH_1$ or $\MCH_2$
is self-looping, then $\MCHS$ is not DF-reducible.  Otherwise, we may
assume that $\chpcst_1\in\MCH_1$ and $\chncst_1\in\MCH_2$ without loss of
generality.  Then $\MCHS$ DF-reduces to $\MCHS'$ via $\chpcst_1$ for some
$\MCHS'$ containing $n-1$ channel sets. Note that $\bigcup(\MCHS')$
contains at least $n-1$ channel pairs
$(\chpcst_2,\chncst_2),\ldots,(\chpcst_n,\chncst_n)$. By induction
hypothesis, $\MCHS'$ is not DF-reducible. So $\MCHS$ is not DF-reducible,
either.
\end{proof}
\def\fMCH{\rho_{\it CH}}
\def\fMCHS{{\cal R}_{\it CH}}
Given an expression $\exp$ in $\langch$, we use
$\fMCH(\exp)$ for the set of channels contained in $\exp$.
Given a pool $\Pi$ in $\langch$, we use $\fMCHS(\Pi)$ for the
collection of $\fMCH(\Pi(tid))$, where $\tid$ ranges over $\dom(\Pi)$.
\begin%
{lemma}
\label{lemma:DFreducibility:4}
If $\fMCHS(\Pi)$ is DF-reducible and $\Pi$ evaluates to $\Pi'$, then
$\fMCHS(\Pi')$ is also DF-reducible.
\end{lemma}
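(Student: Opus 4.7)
The plan is to proceed by case analysis on which of the eight rules (PR0, PR1, PR2, PR3, PR4-send, PR4-recv, PR4-skip, PR4-close) derives $\Pi\eval\Pi'$, grouping them by how they transform $\fMCHS$.

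For PR0, the reduct of every pure redex carries exactly the same multiset of channels as the redex itself (this follows from Lemma~\ref{lemma:langz:substitution} together with the ``equal resources in the two branches'' restriction attached to $\dif$), and ad-hoc reductions of constant functions other than the channel primitives do not touch channels. Hence $\fMCHS(\Pi')=\fMCHS(\Pi)$, and DF-reducibility is trivially preserved. The same equality, and hence the same conclusion, applies to PR4-send, PR4-recv, and PR4-skip, since each partial redex is simply replaced by the channel constant it was applied to, so the channel set belonging to each thread is exactly preserved.

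The three remaining ``boundary'' cases each dispatch to one of the propositions already proved. In PR2 the eliminated thread holds $\tuple{}$, which contains no channels; hence $\fMCHS(\Pi')$ is obtained from $\fMCHS(\Pi)$ by discarding one empty set, and Proposition~\ref{prop:DFreducibility:0} applies. In PR4-close a matching pair $\chpcst,\chncst$ is removed from two distinct sets, which is exactly the hypothesis of Proposition~\ref{prop:DFreducibility:2}. In PR3 the caller thread $\tid_0$ contains $\fMCH(E)\cup\fMCH(\lam{x}{e})$ (disjoint by linearity) and is replaced by two new sets $\fMCH(E)\cup\{\chncst\}$ and $\fMCH(\lam{x}{e})\cup\{\chpcst\}$; a single DF-reduction of $\fMCHS(\Pi')$ via $\chpcst$ merges these two sets and erases the fresh pair, recovering $\fMCHS(\Pi)$, so Proposition~\ref{prop:DFreducibility:1} delivers the conclusion.

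The main obstacle is PR1, because there the caller's channel set $\fMCH(E)\cup\fMCH(\lam{x}{e})$ is split in two without any fresh channel pair being introduced, so neither Proposition~\ref{prop:DFreducibility:1} nor Proposition~\ref{prop:DFreducibility:2} applies directly. I would establish the following auxiliary ``splitting'' lemma: if $\MCHS$ is DF-reducible and some $\MCH\in\MCHS$ is written as a disjoint union $A\cup B$, then $(\MCHS\setminus\{\MCH\})\cup\{A,B\}$ is also DF-reducible. The proof proceeds by induction on $|\bigcup(\MCHS)|$. The base case is trivial, since then $\MCH=\emptyset$ and splitting merely adds another empty set. For the inductive step, take a witnessing reduction $\MCHS\DFred\MCHS^*$ via some $\chpcst$ involving $\MCH_a,\MCH_b\in\MCHS$: if $\MCH$ is neither $\MCH_a$ nor $\MCH_b$, the same DF-reduction on the split collection yields a split of $\MCHS^*$; otherwise assume $\MCH=\MCH_a$ (the other case being symmetric) and, WLOG, $\chpcst\in A$, so that DF-reducing the split collection via $\chpcst$ merges $A$ with $\MCH_b$ while $B$ is left untouched, again producing a split of $\MCHS^*$. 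In either case the induction hypothesis supplies DF-reducibility of the reduct, and Proposition~\ref{prop:DFreducibility:1} lifts it to the split collection. Applying this lemma with $\MCH=\fMCH(E)\cup\fMCH(\lam{x}{e})$, $A=\fMCH(E)$, $B=\fMCH(\lam{x}{e})$ settles the PR1 case and completes the induction.
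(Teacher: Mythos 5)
Your proof is correct, and for the cases the paper treats explicitly it follows the same route: PR3 is handled by observing $\fMCHS(\Pi')\DFred\fMCHS(\Pi)$ via the fresh channel pair and invoking Proposition~\ref{prop:DFreducibility:1}, PR4-close by Proposition~\ref{prop:DFreducibility:2}, and the remaining rules by noting that the collection is essentially unchanged (with PR2 discharged by Proposition~\ref{prop:DFreducibility:0}). The genuine difference is your treatment of PR1. The paper's proof opens by asserting that $\fMCHS(\Pi)$ and $\fMCHS(\Pi')$ coincide except under PR3 and the PR4 rules, which silently dismisses PR1; but since the thunk passed to $\fthreadcreate$ has the true viewtype $\tunit\ltimp\tunit$ and may capture channels, PR1 really does split one channel set into two, and none of Propositions~\ref{prop:DFreducibility:0}--\ref{prop:DFreducibility:2} covers that. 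Your auxiliary splitting lemma is exactly the missing ingredient, and your proof of it is sound: Proposition~\ref{prop:DFreducibility:1} lets you certify DF-reducibility from a single witnessing reduction, so tracking one reduction of the original collection down to a split of its reduct and applying the induction hypothesis suffices. In this respect your argument does not merely reproduce the paper's proof but repairs a gap in it. Two small remarks: in the PR0 case the redexes $\dfst{\cdot}$ and $\dsnd{\cdot}$ discard a component, and the reason no channels are lost there is Proposition~\ref{proposition:value_type} (the discarded component has a non-linear type) rather than the substitution lemma or the $\dif$ side condition; and your direct observation that PR4-send and PR4-recv leave $\fMCHS$ unchanged is legitimate here precisely because the simplified type schemas for $\fsend$ and $\frecv$ transfer no value---the paper instead sketches the more general argument needed when a channel can itself be sent over a channel.
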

\begin%
{proof}
Note that $\fMCHS(\Pi)$ and $\fMCHS(\Pi')$ are the same unless $\Pi$
evaluates to $\Pi'$ according to one of the rules $\mbox{PR3}$,
$\mbox{PR4-send}$, $\mbox{PR4-recv}$, $\mbox{PR4-skip}$, and
$\mbox{PR4-close}$.
\begin%
{itemize}
\item
For the rule $\mbox{PR3}$: We have $\fMCHS(\Pi')\DFred\fMCHS(\Pi)$ via the
newly introduced channel $\chpcst$. By
Proposition~\ref{prop:DFreducibility:1}, $\fMCHS(\Pi')$ is DF-reducible.
\item
For the rule $\mbox{PR4-send}$: Let $\chpcst$ be the channel on which a
value is sent when $\Pi$ evaluates to $\Pi'$. Note that this value can
itself be a channel or contain a channel. We have $\fMCHS(\Pi)\DFred\MCHS$
via $\chpcst$ for some $\MCHS$. So $\MCHS$ is DF-reducible by definition.
Clearly, $\fMCHS(\Pi')\DFred\MCHS$ via $\chpcst$ as well.  By
Proposition~\ref{prop:DFreducibility:2}, $\fMCHS(\Pi')$ is DF-reducible.
\item
For the rule $\mbox{PR4-recv}$: This case is similar to the previous one.
\item
For the rule $\mbox{PR4-skip}$: This case is trivial as $\fMCHS(\Pi)$ and
$\fMCHS(\Pi')$ are the same.
\item
For the rule $\mbox{PR4-close}$: We have that $\fMCHS(\Pi')$ is DF-reducible
by Proposition~\ref{prop:DFreducibility:2}.
\end{itemize}
In order to fully appreciate the argument made in the case of
$\mbox{PR4-send}$, one needs to imagine a scenario where a channel is
actually transferred from one thread into another. While this scenario
does not happen here due to the simplified version of type schemas
assigned to $\fsend$ and $\frecv$, one can find the essential details
in the original paper on DF-reducibility~\cite{MTLC-i-sessiontype}.
\end{proof}
We are now ready to give a proof for
Lemma~\ref{lemma:langch:deadlock-freedom}:
\vspace{6pt}

\noindent%
{\it Proof}~~
Note that any channel, either positive or negative, can appear at most
once in $\fMCHS(\Pi)$, and a channel $\chpcst$ appears in
$\fMCHS(\Pi)$ if and only if its dual $\chncst$ also appears in
$\fMCHS(\Pi)$. In addition, any positive channel $\chpcst$ being
assigned a type of the form $\tchan(G,\ST)$ in the type derivation of
$\Pi$ for some session type $S$ mandates that its dual $\chncst$ be
assigned the type of the form $\tchan(\setcomp{G},\ST)$.

Assume that $\Pi(\tid)$ is a blocked expression for each
$\tid\in\dom(\Pi)$. If the partial redex in $\Pi(\tid_1)$ involves a
positive channel $\chpcst$ while the partial redex in $\Pi(\tid_2)$
involves its dual $\chncst$, then these two partial redexes must
match. This is due to $\Pi$ being well-typed. In other words, the ids
of the channels involved in the partial redexes of $\Pi(\tid)$ for
$\tid\in\dom(\Pi)$ are all distinct. This simply implies that there
are $n$ channel pairs $(\chpcst,\chncst)$ in $\bigcup(\fMCHS(\Pi))$
for some $n$ greater than or equal to the size of $\Pi$.  By
Lemma~\ref{lemma:DFreducibility:3}, $\fMCHS(\Pi)$ is not reducible.
On the other hand, $\fMCHS(\Pi)$ is reducible by
Lemma~\ref{lemma:DFreducibility:4} as $\Pi_0$ evaluates to $\Pi$ (in
many steps) and $\fMCHS(\Pi_0)$ (containing only sets that are empty)
is reducible. This contradiction indicates that there exist $\tid_1$
and $\tid_2$ such that $\Pi(\tid_1)$ and $\Pi(\tid_2)$ are matching
blocked expressions. Therefore $\Pi$ evaluates to $\Pi'$ for some pool
$\Pi'$ according to one of the rules $\mbox{PR4-clos}$,
$\mbox{PR4-send}$, and $\mbox{PR4-recv}$.

With Proposition~\ref{prop:DFreducibility:0}, the case can be handled
similarly where $\Pi(0)$ is a value containing no channels and $\Pi(\tid)$
is a blocked expression for each positive
$\tid\in\dom(\Pi)$.~\hfill\fillsquare

\vspace{6pt}
Please assume for the moment that we would like to add into $\langch$ a function
$\fchancreatetwo$ of the following type schema:
\[
\fontsize{8pt}{9pt}\selectfont
\begin%
{array}{c}
((\tchan(G_1,\ST_1),\tchan(G_2,\ST_2))\ltimp\tunit)\ctimp(\tchan(\setcomp{G_1},\ST_1),\tchan(\setcomp{G_2},\ST_2)) \\
\end{array}\]
One may think of $\fchancreatetwo$ as a ``reasonable'' generalization of
$\fchancreate$ that creates in a single call two channels instead of one.
Unfortunately, adding $\fchancreatetwo$ into $\langch$ can potentially
cause a deadlock. For instance, we can easily imagine a scenario where the
first of the two channels $(\chncst_1,\chncst_2)$ returned from a call to
$\fchancreatetwo$ is used to send the second to the newly created thread
by the call, making it possible for that thread to cause a deadlock by
waiting for a value to be sent on $\chpcst_2$.  Clearly,
Lemma~\ref{lemma:DFreducibility:4} is invalidated if $\fchancreatetwo$ is
added.

\vspace{6pt}
\noindent%
The soundness of the type system of $\langch$ rests upon the following
two theorems
(corresponding to
Theorem~\ref{theorem:langz:subject_reduction_on_pools}
and Theorem~\ref{theorem:langz:progress_on_pools}):
\begin%
{theorem}
(Subject Reduction on Pools)
\label{theorem:langch:subject_reduction_on_pools}
Assume that $\tpjg\pool_1:\vwty$ is derivable and
$\pool_1\eval\pool_2$ such that $\resof(\pool_2)\in\RES$.
Then $\tpjg\pool_2:\vwty$ is derivable.
\end{theorem}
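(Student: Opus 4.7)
The plan is to proceed by structural induction on the derivation of $\tpjg\pool_1:\vwty$, mirroring the argument used for Theorem~\ref{theorem:langz:subject_reduction_on_pools}. After inversion on the $\mbox{\bf(ty-pool)}$ rule, the analysis splits on which evaluation rule justifies $\pool_1\eval\pool_2$. The three rules (PR0), (PR1), and (PR2) are inherited verbatim from $\langz$, so the previous proof transfers directly, with Lemma~\ref{lemma:langz:substitution} still serving as the workhorse behind (PR0). Only the five channel-related rules (PR3), (PR4-send), (PR4-recv), (PR4-skip), and (PR4-close) require new reasoning, and each of them is handled by a uniform ``context typing'' argument: invert on the thread typing $(\emptyset;\emptyset)\tpjg E[\exp]:\vwty'$ to extract a viewtype for the hole, then feed the redex's viewtype into the c-type schema of the relevant primitive listed in Section~\ref{section:langch}.

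For (PR3), the c-type schema of $\fchancreate$ forces the argument to have viewtype $\tchan(G,\ST)\ltimp\tunit$ and the call itself to have viewtype $\tchan(\setcomp{G},\ST)$. After the step, thread $\tid_0$ is retyped by replacing the call with $\chncst:\tchan(\setcomp{G},\ST)$, and the freshly spawned thread $\app{\lam{x}{e}}{\chpcst}$ is typed at $\tunit$ by $\mbox{\bf(ty-app-l)}$ with $\chpcst:\tchan(G,\ST)$. The two introduced channel constants then carry matching dual types as required. For (PR4-send) and (PR4-recv), the c-type schemas of $\fsend$ and $\frecv$ together with the pool-wide invariant that $\chpcst$ and $\chncst$ carry matching types imply that $\chpcst$ must enter the step at viewtype $\tchan(G,\chmsg{i,j}{\ST})$ and $\chncst$ at $\tchan(\setcomp{G},\chmsg{i,j}{\ST})$; after the step, retyping them at $\tchan(G,\ST)$ and $\tchan(\setcomp{G},\ST)$ respectively keeps the matching invariant intact and lets each thread be re-derived in its original context. (PR4-skip) is treated analogously, with the same complementary reasoning covering both the internal ($i,j\in G$) and external ($i,j\not\in G$) cases. (PR4-close) is immediate: both endpoints are consumed into $\dunit$ at $\tunit$, and $\resof(\pool_2)\in\RES$ by hypothesis.

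The main obstacle is the cross-thread coherence invariant, namely that within any derivation of $\tpjg\pool:\vwty$ the type assigned to a positive channel $\chpcst$ always matches the type assigned to its dual $\chncst$. This invariant has to be threaded carefully through the inversion steps for (PR4-$*$): it is precisely what guarantees that the roles $i$ and $j$ in the head $\stmsg(i,j)$ are classified the same way with respect to $G$ and $\setcomp{G}$ on both endpoints, so that the two affected threads consistently strip the same head and land on matching residual session types $\tchan(G,\ST)$ and $\tchan(\setcomp{G},\ST)$. Once this coherence is in hand, reassembling the pool derivation via $\mbox{\bf(ty-pool)}$ is routine, the resource-multiset hypothesis $\resof(\pool_2)\in\RES$ absorbs the ad-hoc bookkeeping inherited from $\langz$, and the theorem follows.
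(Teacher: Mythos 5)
Your proposal is correct and follows essentially the same route as the paper: the paper's own proof simply states that the argument is the same as for Theorem~\ref{theorem:langz:subject_reduction_on_pools} plus a check that the channel rules (PR3) and (PR4-$*$) are consistent with the typing rules, and your case analysis---using the c-type schemas of the primitives together with the stated invariant that dual channels carry matching types---is precisely that check spelled out in detail.
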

\begin%
{proof}
The proof is essentially the same as the one for
Theorem~\ref{theorem:langz:subject_reduction_on_pools}.  The only
additional part is for checking that the rules $\mbox{PR3}$,
$\mbox{PR4-clos}$, $\mbox{PR4-send}$, and $\mbox{PR4-recv}$ are all
consistent with respect to the typing rules listed in
Figure~\ref{figure:langz:typing_rules}.
\end{proof}

\begin%
{theorem}
(Progress Property on Pools)
\label{theorem:langch:progress_on_pools}
Assume that $\tpjg\pool_1:\vwty$ is derivable and
$\resof(\pool_1)$ is valid.
Also assume that $\resof(v)$ contains no channels
for every value $v$ of the type $\vwty$. Then we have
the following possibilities:
\begin{itemize}
\item
$\pool_1$ is a singleton mapping $[0\mapsto\val]$ for some $\val$, or
\item
$\pool_1\eval\pool_2$ holds for some $\pool_2$ such that $\resof(\pool_2)\in\RES$.
\end{itemize}
\end{theorem}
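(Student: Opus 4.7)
The plan is to combine a per-thread canonical-forms/progress analysis (in the style of Theorem~\ref{theorem:langz:progress_on_pools}) with the global Deadlock-Freedom Lemma~\ref{lemma:langch:deadlock-freedom} to cover the case where every non-value thread is blocked on a channel primitive. The main structural fact I would establish first, by induction on the typing derivation of each closed thread using Lemma~\ref{lemma:langz:canonical} at eliminator positions, is that every $\Pi_1(\tid)$ is either a value or decomposes as $E[\exp]$ where $\exp$ is one of: (i) a pure or ad-hoc redex of $\langz$; (ii) $\fthreadcreate(\lam{x}{e})$; (iii) $\fchancreate(\lam{x}{e})$; or (iv) a partial redex $\fsend(\chcst)$, $\frecv(\chcst)$, $\fskip(\chcst)$, or $\fclose(\chcst)$.

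With this in hand I would case-split on the pool. If $\Pi_1$ is the singleton $[0\mapsto\val]$, the first conclusion holds. Otherwise, if some $\Pi_1(\tid)$ with $\tid>0$ is a value, rule \textbf{(ty-pool)} forces its type to be $\tunit$ and Lemma~\ref{lemma:langz:canonical} forces it to be $\dunit$, so rule (PR2) applies. If some thread is $E[\exp]$ with $\exp$ a pure or ad-hoc redex, apply (PR0); if some thread is $E[\fthreadcreate(\lam{x}{e})]$, apply (PR1); if some thread is $E[\fchancreate(\lam{x}{e})]$, apply (PR3). In each such reduction I would verify $\resof(\pool_2)\in\RES$: for (PR0) this uses the standing assumption on ad-hoc reducts, while for (PR1) and (PR3) the multiset of resources is only rearranged across threads and augmented with a fresh matched pair $\chpcst,\chncst$, which preserves validity.

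If none of the above clauses fires, then every $\Pi_1(\tid)$ with $\tid>0$ is a blocked expression, and $\Pi_1(0)$ is either a blocked expression or a value. In the latter sub-case the extra hypothesis of the theorem --- that $\resof(v)$ contains no channels for every value $v$ of type $\vwty$ --- ensures $\Pi_1(0)$ carries no channel resources, so $\Pi_1$ meets the premises of Lemma~\ref{lemma:langch:deadlock-freedom}. The lemma then yields two thread ids $\tid_1,\tid_2$ whose blocked expressions are matching partial redexes, and the corresponding rule among (PR4-send), (PR4-recv), (PR4-skip), (PR4-close) produces $\Pi_2$; resource validity of $\Pi_2$ follows because only values (possibly channels) are transferred between threads or two dual channels are simultaneously consumed.

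The main obstacle is the hypothesis of Lemma~\ref{lemma:langch:deadlock-freedom} that $\Pi$ be reachable by evaluation from an initial channel-free pool. I would handle this by reading the progress theorem along an execution trace: starting from an initial pool with no channels, Lemma~\ref{lemma:DFreducibility:4} together with Theorem~\ref{theorem:langch:subject_reduction_on_pools} propagates both well-typedness and DF-reducibility of $\fMCHS(\Pi)$ through each reduction step, so the reachability side-condition is maintained and the deadlock-freedom lemma is applicable at every intermediate pool. The other delicate point, which the channel-free hypothesis on values of type $\vwty$ is designed to handle, is the scenario where $\Pi_1(0)$ returns a value still holding a channel while another thread blocks on its dual; that scenario is not ruled out in general and would genuinely violate progress, so I would be careful to invoke the hypothesis exactly at this step.
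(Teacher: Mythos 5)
Your proposal is correct and follows essentially the same route as the paper, whose own proof is only a two-sentence sketch saying to reuse the structure of Theorem~\ref{theorem:langz:progress_on_pools} and invoke Lemma~\ref{lemma:langch:deadlock-freedom} for the all-blocked case. Your elaboration --- the value-or-(partial-)redex decomposition, the case analysis over (PR0)--(PR3), and in particular your explicit handling of the reachability side-condition of the deadlock-freedom lemma and of where the channel-free-value hypothesis on $\vwty$ is needed --- fills in details the paper leaves implicit, and does so correctly.
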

\begin%
{proof}
The proof follows the same structure as the one for
Theorem~\ref{theorem:langz:progress_on_pools}.
Lemma~\ref{lemma:langch:deadlock-freedom} is needed to handle the
case where all of the threads (possibly excluding the main thread) in a
pool consist of blocked expressions.
\end{proof}

\section%
{From Dyadic to Multiparty}
\label{section:dyadic_to_multiparty}
In this section, we present an approach to building multiparty
sessions based on dyadic g-sessions. With this approach, we give
justification in support of the theoretical development in
Section~\ref{section:langz} and Section~\ref{section:langch}.

\begin{figure}[htp]
\centering{%
\includegraphics[scale=0.5]{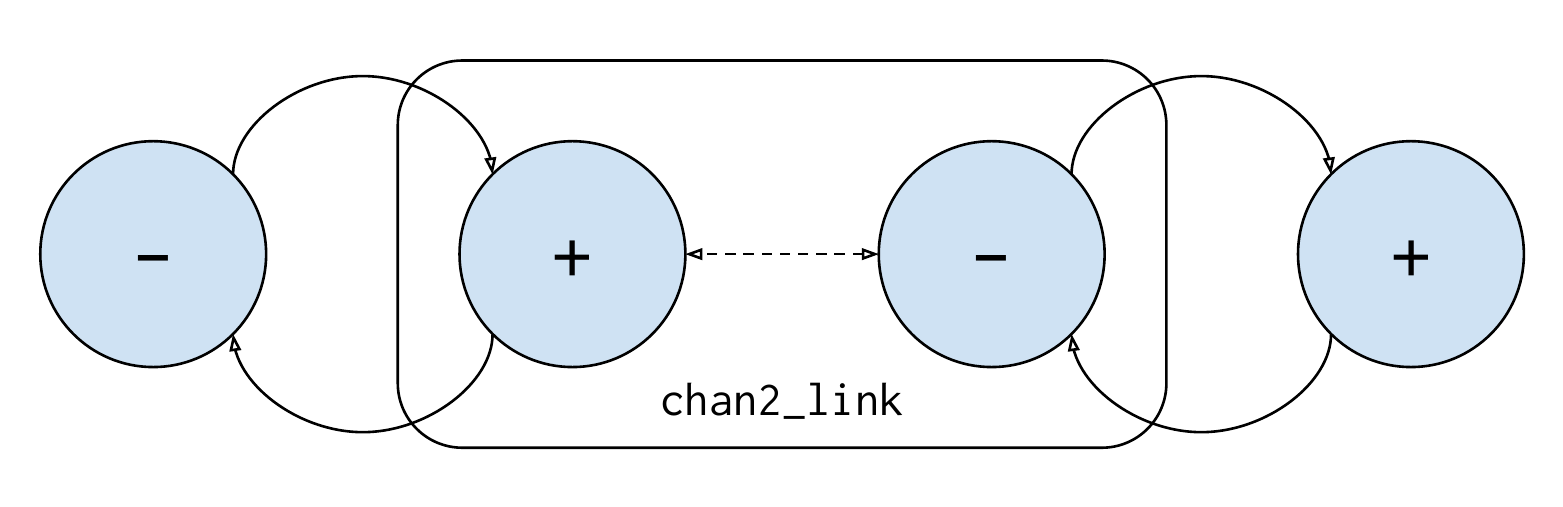}%
}
\caption{Illustrating $\fchanlinktwo$}
\label{figure:illustration_for_chan2_link}
\end{figure}
\subsection%
{Bidirectional Forwarding between Two Parties}
Given two channels of dual types, there is a generic method for
forwarding onto one channel each message received from the other
channel and vice versa. In formalizations of session types that are
directly based on linear logic(e.g.,~\cite{CairesP10,Wadler12}), this
form of bidirectional forwarding of messages corresponds to the
cut-elimination process in linear logic%
\footnote{%
It should be noted in this context that it is of no concern as to
whether the cut-elimination process is terminating or not. Instead,
the focus is solely on proving that each cut involving a compound
formula can be reduced into one or more cuts (so as to make progress).}.

\begin%
{theorem}
\label{theorem:chan2_link}
Assume that $\chcst_0$ and $\chcst_1$ are two channels of the types
$\tchan(G,\ST)$ and $\tchan(\setcomp{G},\ST)$, respectively.  For any
party holding $\chcst_0$ and $\chcst_1$, there is a generic method for
forwarding onto $\chcst_0$ the messages received from $\chcst_1$ and
vice versa (during the evaluation of a well-typed program). Let us use
the name $\fchanlinktwo$ for a function of the following type that
implements this generic method:
$$
(\tchan(G,\ST),\tchan(\setcomp{G},\ST))\ctimp\tunit
$$
\end{theorem}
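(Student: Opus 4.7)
The plan is to construct $\fchanlinktwo$ by induction on the session type $\ST$ and verify that the resulting expression has the declared c-type using the schemas of $\fsend$, $\frecv$, $\fskip$, and $\fclose$. Since every session type in $\langch$ is a finite term, this yields, for each specific $\ST$, a concrete $\langch$ expression; no runtime representation of $\ST$ nor any new recursive combinator on session types is needed.

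For the base case $\ST = \stnil$, take $\fchanlinktwo(\chcst_0, \chcst_1)$ to be the sequential composition of $\fclose(\chcst_0)$ and $\fclose(\chcst_1)$. Each channel has type $\tchan(\cdot, \stnil)$, so $\fclose$ applies on both and the result has type $\tunit$.

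For the inductive step $\ST = \chmsg{i,j}{\ST'}$, I split into cases on the membership of $i$ and $j$ in $G$. When $i,j$ are both in $G$ or both outside $G$, the head message is interpreted as internal on one endpoint and as external on the other; applying $\fskip$ to both $\chcst_0$ and $\chcst_1$ meets the relevant side conditions and peels the head $\stmsg$ off on both sides, after which I appeal to the induction hypothesis on $\ST'$. When $i \in G$ and $j \notin G$, the c-type of $\chcst_0$ demands a send on the $G$-side, while the c-type of $\chcst_1$ demands a receive on the $\setcomp{G}$-side (since $i \notin \setcomp{G}$ and $j \in \setcomp{G}$); the forwarder performs $\frecv(\chcst_1)$ followed by $\fsend(\chcst_0)$ and then invokes the induction hypothesis. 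The remaining subcase $i \notin G, j \in G$ is symmetric, with $\frecv$ applied to $\chcst_0$ and $\fsend$ to $\chcst_1$.

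The one point requiring a choice is the ordering of send and receive within a single forwarding step. In the simplified presentation where $\stmsg(i,j)$ carries no value either order type-checks, but the recv-then-send order is canonical: in the natural extension where $\stmsg(i,j,T)$ carries a value of viewtype $T$, the received value must be in hand before it can be forwarded, and this ordering also reflects the intuition of $\fchanlinktwo$ as a relay between the two external partners. Modulo this design choice, the argument is mechanical, with each case discharged by a direct application of the c-type schemas of the four primitive communication functions.
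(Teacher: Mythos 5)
Your proposal is correct and follows essentially the same route as the paper's proof: close both channels when $\ST=\stnil$, and for $\chmsg{i,j}{\ST'}$ do a four-way case split on the membership of $i$ and $j$ in $G$, skipping on both endpoints for internal/external messages and doing recv-then-send (in that order) for cross-group messages, then continuing on $\ST'$. The only cosmetic difference is that the paper phrases the continuation as a recursive call while you unfold it by structural induction on the (finite) session type; for the grammar of $\ST$ given in the paper these amount to the same construction.
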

\begin%
{proof}
If $\ST$ is $\stnil$, then all that is needed is to
call $\fclose$ on both $\chcst_0$ and $\chcst_1$.
Assume that
$\ST$ is of the form
$\chmsg{i,j}{\ST_1}$. Then we have the following 4 possibilities.
\begin%
{itemize}
\item
Assume $i\in G$ and $j\in G$.  Then $\stmsg(i,j)$ indicates an
internal message for $\chcst_0$ and an external message for
$\chcst_1$. So this case is handled by calling $\fskip$ on $\chcst_0$
and $\chcst_1$.
\item
Assume $i\in G$ and $j\in\setcomp{G}$.
Then $\stmsg(i,j)$ indicates sending for $\chcst_0$
and receiving for $\chcst_1$. So this case is handled by
calling $\frecv$ on $\chcst_1$ to receive a message and then
calling $\fsend$ on $\chcst_0$ to send the message.
\item
Assume $i\in\setcomp{G}$ and $j\in G$.
This case is similar to the one where $i\in G$ and $j\in\setcomp{G}$.
\item
Assume $i\in\setcomp{G}$ and $j\in\setcomp{G}$.
This case is similar to the one where $i\in G$ and $j\in G$.
\end{itemize}
After one of the above 4 possibilities is performed, a recursive
call can be made on $\chcst_0$ and $\chcst_1$ to perform the rest of
bidirectional forwarding.
\end{proof}
An illustration of $\fchanlinktwo$ is given
in Figure~\ref{figure:illustration_for_chan2_link}.
We point out
that Lemma~\ref{lemma:DFreducibility:4} still holds after
$\fchanlinktwo$ is added, and thus
Lemma~\ref{lemma:langch:deadlock-freedom} still holds as well.

A dyadic i-session involves only two roles: 0 and 1.
If we just study dyadic g-sessions corresponding to dyadic
i-sessions, then a channel type is of the form $\tchan(G,\ST)$ for
either $G=\iset{0}$ or $G=\iset{1}$. In this context, it is
unclear how Theorem~\ref{theorem:chan2_link} can be generalized.
When more than two roles are involved, there turns out to be a natural
generalization of Theorem~\ref{theorem:chan2_link}, which we report
in the next section.

\begin{figure}[htp]
\centering{%
\includegraphics[scale=0.5]{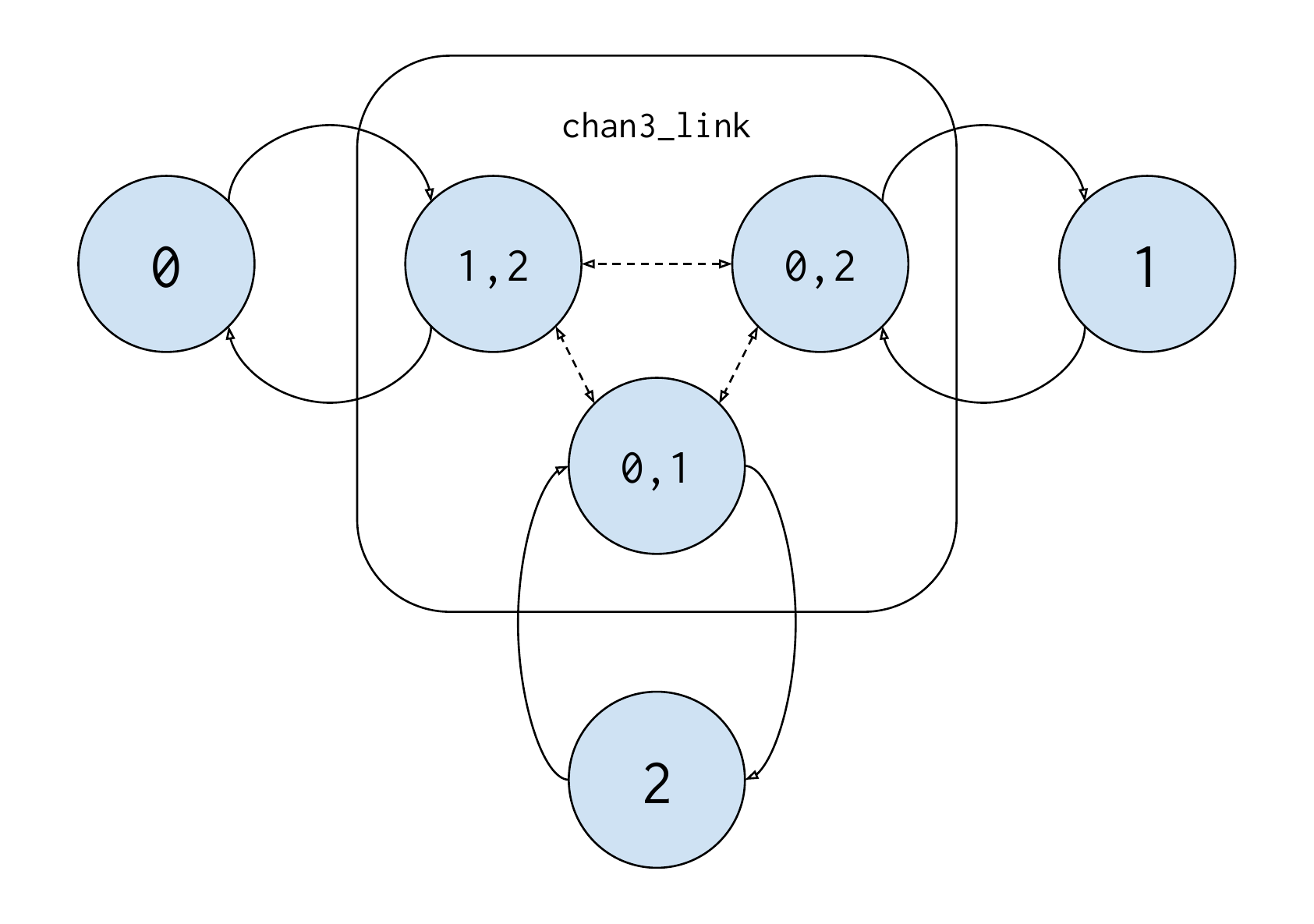}%
}
\caption{Illustrating $\fchanlinkthree$ involving 3 roles}
\label{figure:illustration_for_chan3_link}
\end{figure}
\subsection%
{Bidirectional Forwarding between Three Parties}
The next theorem states the existence of a generic method for
forwarding messages between three channels of certain types:
\begin%
{theorem}
\label{theorem:chan3_link}
Assume that $\chcst_0$ and $\chcst_1$ are two channels of the
types $\tchan(G_0,\ST)$ and $\tchan(G_1,\ST)$, respectively, where
$\setcomp{G_0}\cap\setcomp{G_1}=\emptyset$ holds, and $\chcst_2$ is
another channel of the type $\tchan(G_2,\ST)$ for
$G_2=\setcomp{G_0}\cup\setcomp{G_1}$.  Then there is a generic method
for forwarding each message received from one of $\chcst_0$,
$\chcst_1$ and $\chcst_2$ onto one of the other two in a type-correct
manner.
Let us use the name $\fchanlinkthree$ for a function of the following
type that implements this generic method:
$$
(\tchan(G_0,\ST),\tchan(G_1,\ST), \tchan(\setcomp{G_0}\cup\setcomp{G_1},\ST))\ctimp\tunit
$$
\end{theorem}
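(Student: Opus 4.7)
The plan is to proceed by induction on the structure of $\ST$, in direct analogy with the proof of Theorem~\ref{theorem:chan2_link}. If $\ST=\stnil$, I would close all three channels by calling $\fclose$ on each. Otherwise $\ST=\chmsg{i,j}{\ST_1}$ and I would first execute one forwarding round dictated by the positions of $i$ and $j$ relative to $G_0$, $G_1$, and $G_2=\setcomp{G_0}\cup\setcomp{G_1}$, after which the three channels have residual types $\tchan(G_0,\ST_1)$, $\tchan(G_1,\ST_1)$, $\tchan(G_2,\ST_1)$ and a recursive call completes the argument.

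The key observation that keeps the case analysis manageable is that, because $\setcomp{G_0}\cap\setcomp{G_1}=\emptyset$, the ambient role set is partitioned into exactly three disjoint regions: $A=G_0\cap G_1=\setcomp{G_2}$, $B=\setcomp{G_0}=G_1\cap G_2$, and $C=\setcomp{G_1}=G_0\cap G_2$. Every role lies in exactly one of $A$, $B$, $C$, so there are nine placements of the pair $(i,j)$. For each placement I would simply read off, from the four-way interpretation of $\stmsg(i,j)$ given in Section~\ref{section:introduction}, what action each of $\chcst_0$, $\chcst_1$, $\chcst_2$ performs. In the three diagonal cases ($i$ and $j$ in the same region) every channel sees either an internal or an external message, so $\fskip$ on all three implements the round. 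In the six off-diagonal cases a brief check shows that exactly one channel sees a genuine send, exactly one sees the corresponding receive, and the third sees an internal or external message; the round is then realized by calling $\frecv$ on the receiver, $\fsend$ on the sender with the value just received, and $\fskip$ on the third channel.

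The main obstacle is not conceptual but the bookkeeping in the six off-diagonal cases, and the point where a subtle mistake would creep in is precisely the hypothesis $\setcomp{G_0}\cap\setcomp{G_1}=\emptyset$: were it omitted, a role could lie outside both $G_0$ and $G_1$, whence a message with an endpoint in that fourth region would force $\chcst_2$ into a send or a receive with no matching counterpart on either of $\chcst_0$ and $\chcst_1$, and both the forwarding scheme for the current round and the recursion on $\ST_1$ would break down. With the disjointness hypothesis in hand, the exhaustive case check goes through uniformly and the inductive step is closed.
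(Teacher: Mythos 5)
Your proposal is correct and follows essentially the same route as the paper's proof: partition the roles into the three pairwise-disjoint regions $\setcomp{G_0}$, $\setcomp{G_1}$, and $\setcomp{G_2}=G_0\cap G_1$, enumerate the nine placements of $(i,j)$, handle the three diagonal cases by calling $\fskip$ on all three channels and each off-diagonal case by one $\frecv$, one $\fsend$, and one $\fskip$, then recurse on $\ST_1$. Your added remark on why the disjointness hypothesis is indispensable is a useful observation but does not change the argument.
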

\begin%
{proof}
If $\ST$ is $\stnil$, then all that is needed is to call $\fclose$ on
each of $\chcst_0$, $\chcst_1$, and $\chcst_2$.  Assume $\ST$ is of
the form $\chmsg{i,j}{\ST_1}$ for some roles $i$ and $j$.  Note that
$\setcomp{G_0}$, $\setcomp{G_1}$, and $\setcomp{G_2}=G_0\cap G_1$ are
pairwise disjoint, and the union of these three equals the full set of
roles. So we have 9 scenarios covering all of the possibilities of
$i\in G$ and $j\in G'$ for $G$ and $G'$ ranging over $\setcomp{G_0}$,
$\setcomp{G_1}$, and $\setcomp{G_2}$.
\begin%
{itemize}
\item
Assume $i\in\setcomp{G_0}$ and $j\in\setcomp{G_0}$.
Then $\stmsg(i,j)$ indicates an external message for $\chcst_0$,
an internal message for $\chcst_1$, and
an internal message for $\chcst_2$. So this case is handled by
calling $\fskip$ on each of $\chcst_0$, $\chcst_1$, and $\chcst_2$.
\item
Assume $i\in\setcomp{G_0}$ and $j\in\setcomp{G_1}$.  Then
$\stmsg(i,j)$ indicates receiving for $\chcst_0$, sending for
$\chcst_1$ and an internal message for $\chcst_2$. So this case
is handled by calling $\frecv$ on $\chcst_0$ to receive a message, and
then calling $\fsend$ on $\chcst_1$ to send the message, and then
calling $\fskip$ on $\chcst_2$.
\item
Assume $i\in\setcomp{G_0}$ and $j\in \setcomp{G_2}$.  Then
$\stmsg(i,j)$ indicates receiving for $\chcst_0$, an internal message
for $\chcst_1$, and sending for $\chcst_2$.  So this case is handled
by calling $\frecv$ on $\chcst_0$ to receive a message, and then
calling $\fsend$ on $\chcst_2$ to send the message, and then calling
$\fskip$ on $\chcst_0$.
\item
Assume $i\in\setcomp{G_1}$ and $j\in\setcomp{G_0}$.  The case is
similar to the previous one where $i\in\setcomp{G_0}$ and
$j\in\setcomp{G_1}$ holds.
\item
Assume $i\in\setcomp{G_1}$ and $j\in\setcomp{G_1}$.  The case is
similar to the previous one where $i\in\setcomp{G_0}$ and
$j\in\setcomp{G_0}$ holds.
\item
Assume $i\in\setcomp{G_1}$ and $j\in \setcomp{G_2}$.  The case is similar to the
previous one where $i\in\setcomp{G_0}$ and $j\in \setcomp{G_2}$ holds.
\item
Assume $i\in \setcomp{G_2}$ and $j\in\setcomp{G_0}$.
Then $\stmsg(i,j)$ indicates sending for $\chcst_0$,
an internal message for $\chcst_1$, and receiving for
$\chcst_2$. So this case is handled by
calling $\frecv$ on $\chcst_2$ to receive a message,
and then calling $\fsend$ on $\chcst_0$ to send the message,
and then calling $\fskip$ on $\chcst_1$.
\item
Assume $i\in \setcomp{G_2}$ and $j\in\setcomp{G_1}$. The case is similar
to the previous one where $i\in\setcomp{G_1}$ and
$j\in\setcomp{G_2}$ holds.
\item
Assume $i\in \setcomp{G_2}$ and $j\in \setcomp{G_2}$.
Then $\stmsg(i,j)$ indicates an internal message for $\chcst_0$,
an internal message for $\chcst_1$, and
an external message for $\chcst_2$. So this case is handled by
calling $\fskip$ on each of $\chcst_0$, $\chcst_1$, and $\chcst_2$.
\end{itemize}
After one of the above 9 possibilities is performed, a recursive
call can be made on $\chcst_0$, $\chcst_1$ and $\chcst_2$ to perform
the rest of bidirectional forwarding between these channels.
\end{proof}
An illustration of $\fchanlinkthree$ involving 3 roles is given
in Figure~\ref{figure:illustration_for_chan3_link}.
We point out
that Lemma~\ref{lemma:DFreducibility:4} still holds after
$\fchanlinkthree$ is added, and thus
Lemma~\ref{lemma:langch:deadlock-freedom} still holds as well.

The following corollary
(which is partly stated as
Theorem~\ref{theorem:chan2_link_intro} in Section~\ref{section:introduction})
follows from Theorem~\ref{theorem:chan3_link} immediately:
\begin%
{corollary}
\label{theorem:chan2_link_create}
Assume that $\chcst_0$ and $\chcst_1$ are two channels of
the types $\tchan(G_0,\ST)$ and $\tchan(G_1,\ST)$, respectively, where
$\setcomp{G_0}\cap\setcomp{G_1}=\emptyset$ holds. Then there is a method
for creating a channel
$\chcst_2$ of the type $\tchan(G_0\cap G_1,\ST)$ such
that the generic method for forwarding messages as is stated in
Theorem~\ref{theorem:chan3_link} applies to $\chcst_0$, $\chcst_1$ and the
dual $\chcst'_2$ of $\chcst_2$.
Let us use the name $\fchanlinkcreate$ for a function of the
following type that implements the method for creating $\chcst_2$ based on
$\chcst_0$ and $\chcst_1$:
$$
(\tchan(G_0,\ST),\tchan(G_1,\ST))\ctimp\tchan(G_0\cap G_1,\ST)
$$
\end{corollary}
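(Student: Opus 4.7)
The plan is to reduce the corollary directly to Theorem~\ref{theorem:chan3_link} by creating the missing third channel via $\fchancreate$ and letting a freshly spawned thread run $\fchanlinkthree$ on all three channels. The key algebraic observation is the De Morgan identity $\setcomp{G_0\cap G_1}=\setcomp{G_0}\cup\setcomp{G_1}$: whatever dual $\chcst'_2$ we would produce for a channel $\chcst_2:\tchan(G_0\cap G_1,\ST)$ has exactly the type $\tchan(\setcomp{G_0}\cup\setcomp{G_1},\ST)$, which is precisely the third argument type required by Theorem~\ref{theorem:chan3_link}. Moreover, the hypothesis $\setcomp{G_0}\cap\setcomp{G_1}=\emptyset$ in the corollary is literally the disjointness hypothesis of that theorem, so everything lines up.

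Concretely, I would define
$$\fchanlinkcreate(\chcst_0,\chcst_1) \;=\; \fchancreate(\lam{\chcst'_2}{\fchanlinkthree(\chcst_0,\chcst_1,\chcst'_2)}).$$
Here the linear channels $\chcst_0$ and $\chcst_1$ are captured in the closure body, and when $\fchancreate$ fires its operational rule $\mathrm{PR3}$ it hands that closure (together with the freshly allocated negative endpoint $\chcst'_2$) to a new thread while returning the positive endpoint $\chcst_2:\tchan(G_0\cap G_1,\ST)$ to the caller. The rest of the proof is a routine type check: the expected argument type of $\fchancreate$ is $\tchan(\setcomp{G},\ST)\ltimp\tunit$ with $\setcomp{G}=G_0\cap G_1$, i.e.\ $G=\setcomp{G_0}\cup\setcomp{G_1}$, which matches the type of $\chcst'_2$; and then $\fchanlinkthree$ is applied at exactly the instance certified by Theorem~\ref{theorem:chan3_link}.

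The only subtlety worth flagging is deadlock-freedom. Since $\fchanlinkcreate$ is a pure macro over $\fchancreate$ and $\fchanlinkthree$ and introduces no new primitive reduction rule, the argument for Lemma~\ref{lemma:DFreducibility:4} goes through unchanged: the only new reduction effect is one instance of $\mathrm{PR3}$, which is already handled. I expect no real obstacle beyond confirming that the dual-role complement algebra lines up, i.e.\ that the $G_2$ of Theorem~\ref{theorem:chan3_link} coincides with the complement of $G_0\cap G_1$; once that identity is recorded the result is essentially immediate.
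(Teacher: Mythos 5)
Your construction $\fchancreate(\lam{x}{\fchanlinkthree(\chcst_0,\chcst_1,x)})$ is exactly the one-line proof given in the paper, and your accompanying type-check (via $\setcomp{G_0\cap G_1}=\setcomp{G_0}\cup\setcomp{G_1}$) is correct. The proposal matches the paper's approach; no issues.
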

\begin%
{proof}
The channel $\chcst_2$ can simply
be obtained by evaluating the following expression:
$$\fchancreate(\lam{x}{\fchanlinkthree(\chcst_0, \chcst_1, x)})$$
\end{proof}

The very significance of Theorem~\ref{theorem:chan3_link} lies in its
establishing a foundation for a type-based approach to building
multiparty sessions based on dyadic g-sessions. Let us elaborate this
point a bit further. Suppose that we start with one dyadic session
where the two parties communicating via the dual channels $\chcst_0$
and $\chcst'_0$ and another dyadic session where the two parties
communicating via the dual channels $\chcst_1$ and $\chcst'_1$. If the
party holding both $\chcst'_0$ and $\chcst_1$ (which means the party
is shared between the two sessions) does bidirectional forwarding of
messages between them according to Theorem~\ref{theorem:chan2_link}
(that is, calling $\fchanlinktwo$ on $\chcst'_0$ and $\chcst_1$), then
a new session is created but it is still a dyadic one (where the
communication is between $\chcst_0$ and $\chcst'_1$). It is impossible
to build multiparty sessions (involving more than 2 parties) by simply
relying on Theorem~\ref{theorem:chan2_link} if we can only start with
dyadic ones. With Theorem~\ref{theorem:chan3_link}, three dyadic
sessions can be joined together by making a call to $\fchanlinkthree$,
resulting in the creation of a 3-party session.  In other words,
$\fchanlinkthree$ can be seen as a breakthrough.  In order to build
sessions involving more parties, we simply make more calls to
$\fchanlinkthree$.

\subsection%
{A Sketch for Building a 3-Party Session}
\label{subsection:sketch:3-party_session}
\def\tservice{{\bf service}}
\def\fservicecreate{\mbox{\it service\_create}}
\def\fservicerequest{\mbox{\it service\_request}}
Building a session often requires explicit coordination between the
involved parties during the phase of setting-up.  In practice,
designing and implementing coordination between 3 or more parties is
generally considered a difficult issue. By building a multiparty
session based on dyadic g-sessions, we only need to be concerned with
two-party coordination, which is usually much easier to handle.

Given a group $G$ of roles and a session type $\ST$, we introduce a
type $\tservice(G,\ST)$ that can be assigned to a value representing
some form of
{\it persistent} service.  With such a service,
channels of the type $\tchan(\setcomp{G},\ST)$ can be created
repeatedly. A built-in function $\fservicecreate$ is assigned the
following type for creating a service:
$$
(\tchan(G,\ST)\itimp\tunit)\ctimp\tservice(G,\ST)
$$
In contrast with $\fchancreate$ for creating a channel, $\fservicecreate$
requires that its argument be a non-linear function (so that this function
can be called repeatedly).
A client may call the following function to obtain a channel
to communicate with a server that provides the requested service:
\[\begin%
{array}{rcl}
\fservicerequest & : & (\tservice(G,\ST))\ctimp\tchan(\setcomp{G},\ST) \\
\end{array}\]
Suppose we want to build a 3-party session involving 3 roles: 0, 1,
and 2.  We may assume that there are two services of the types
$\tservice(\iset{0},\ST)$ and $\tservice(\iset{1},\ST)$ available to a
party (planning to implement role 2); this party can call
$\fservicerequest$ on the two services (which are just two names) to
obtain two channels $\chcst_0$ and $\chcst_1$ of the types
$\tchan(\iset{1,2},\ST)$ and $\tchan(\iset{0,2},\ST)$, respectively;
it then calls $\fchanlinkcreate(\chcst_0,\chcst_1)$ to obtain a
channel $\chcst_2$ of the type $\tchan(\iset{2},\ST)$ for
communicating with two servers providing the requested services.
Obviously, there are many other ways of building a multiparty session
by passing around multirole channels for dyadic g-sessions.

\def\tchoose{\mbox{\bf choose}}
\def\tchoosetag{\mbox{\bf ctag}}
\def\fchanchoosel{\mbox{\it chan\_choose\_l}}
\def\fchanchooser{\mbox{\it chan\_choose\_r}}
\def\fchoosetagl{\mbox{\it ctag\_l}}
\def\fchoosetagr{\mbox{\it ctag\_r}}
\def\fchanchoosetag{\mbox{\it chan\_choose\_tag}}
\def\tappend{\mbox{\bf append}}
\def\fchanappend{\mbox{\it chan\_append}}
\def\trepeat{\mbox{\bf repeat}}
\section{More Constructors for Session Types}
\label{section:more_st_constructors}
We present in this section
a few additional constructors for session types plus
an example of user-defined session type.

\subsection{Branching}
Given an integer $i$ (representing a role) and two session types
$\ST_0$ and $\ST_1$, we can form a branching session type $\tchoose(i,
\ST_0, \ST_1)$ that is given the following interpretation based a
group $G$ of roles:
\begin%
{itemize}
\item
Assume $i\in G$. Then the session type $\tchoose(i, \ST_0, \ST_1)$
means for the party implementing $G$ to send a message to the party
implementing $\setcomp{G}$ so as to inform the latter which of $\ST_0$ and
$\ST_1$ is chosen for specifying the subsequent communication.  In
order for Theorem~\ref{theorem:chan3_link} to work out in the presence
of $\tchoose$, this message needs to be sent repeatedly, targeting a
new role in $\setcomp{G}$ each time. It needs to be sent $n$ times
if there are $n$ roles in $\setcomp{G}$.
\item
Assume $i\not\in G$. Then the session type $\tchoose(i, \ST_0, \ST_1)$
means for the party implementing $G$ to wait for a message indicating
which of $\ST_0$ and $\ST_1$ is chosen for specifying the subsequent
communication. Note that this message arrives repeatedly for $n$
times, where $n$ is the cardinality of $G$.
\end{itemize}
As can be expected, we have two functions of the following type schemas:
$$
\begin%
{array}{lcr}
\fchanchoosel&:&(\tchan(G,\tchoose(i, \ST_0, \ST_1)))\ctimp\tchan(G,\ST_0) \\
\fchanchooser&:&(\tchan(G,\tchoose(i, \ST_0, \ST_1)))\ctimp\tchan(G,\ST_1) \\
\end{array}
$$
where $i\in G$ is assumed. We have another function $\fchanchoosetag$ of the
following type schema:
$$
\begin%
{array}{lcr}
(\tchan(G,\tchoose(i, \ST_0, \ST_1)))\ctimp\exists\sigma.~(\tchoosetag(\ST_0,\ST_1,\sigma),\tchan(G,\sigma)) \\
\end{array}$$
where $\tchoosetag$ is a datatype with the following two constructors
(which are essentially represented as 0 and 1):
$$
\begin%
{array}{lcr}
\fchoosetagl &:& ()\ctimp \tchoosetag(\ST_0, \ST_1, \ST_0) \\
\fchoosetagr &:& ()\ctimp \tchoosetag(\ST_1, \ST_0, \ST_1) \\
\end{array}$$
By performing pattern matching on the first component of the tuple
returned by a call to $\fchanchoosetag$, one can tell whether the
$\sigma$ is $\ST_0$ or $\ST_1$.
Note that the existential quantification is available in ATS but it is
not part of $\langch$. Hopefully, the idea should be clear to the reader.

\subsection{Sequencing}
Given two session types $\ST_0$ and $\ST_1$, we can form another
one of the form $\tappend(\ST_0, \ST_1)$, which intuitively
means the standard sequencing of $\ST_0$ and $\ST_1$. Also, we have
a function $\fchanappend$ of the following type schema for operating
on a channel of some sequencing session type:
\[
\fontsize{8pt}{9pt}\selectfont
\begin%
{array}{c}
(\tchan(G, \tappend(\ST_0, \ST_1)), \tchan(G, \ST_0)\ltimp\tchan(G, \stnil))\ctimp\tchan(G, \ST_1) \\
\end{array}\]
When applied to a channel and a linear function, $\fchanappend$
essentially calls the linear function on the channel to return another
channel. There is a bit of cheating here because there is no
type-based enforcement of the requirement that the channel returned by
the linear function be the same as the one passed to it, which on the
other hand can be readily achieved in ATS.

\subsection%
{Repeating Indefinitely}
Given an integer $i$ (representing a role) and a session type $\ST$, we
can form another session type of the form $\trepeat(i, \ST)$, which is
essentially defined recursively as follows:
$$
\trepeat(i, \ST) = \tchoose(i, \stnil, \trepeat(i, \ST))
$$
Intuitively, $\trepeat(i, \ST)$ means that the party implementing
the role $i$ determines whether a session specified by $\ST$ should
be repeated indefinitely. For instance, a list-session (between two
parties) can be specified as follows:
$$\trepeat(0, \chmsg{0,1,T}{\stnil})$$
which means the server (implementing the role $0$) offers to the client
(implementing the role $1$) a list of values of the type $T$. Similarly,
a colist-session (between two
parties) can be specified as follows:
$$\trepeat(1, \chmsg{0,1,T}{\stnil})$$ which means the client requests
from the server a list of values of the type $T$.

\begin%
{figure}
\fontsize{8pt}{9pt}\selectfont
\begin%
{verbatim}
datatype
ssn_queue_(a:vtype, i:int, int) =
| queue_nil(a,i,0) of (nil)
| {n:nat}
  queue_enq(a,i,n) of msg(i, 0, a)::ssn_queue(a, n+1)
| {n:pos}
  queue_deq(a,i,n) of msg(0, i, a)::ssn_queue(a, n-1)

where ssn_queue(a:vtype, n:int) =
  choose(0, ssn_queue_(a, 1, n), ssn_queue_(a, 2, n))
\end{verbatim}
\caption{A type for queue-sessions}
\label{figure:ssn_queue}
\end{figure}
\subsection%
{User-Defined Session Types}
We can readily make use of various advanced programming features in
ATS for formulating types for sessions as well as implementing these
sessions.  As a concrete example, a type for queue-sessions written in
the source syntax of ATS is given in Figure~\ref{figure:ssn_queue}, which
makes direct use of dependent types (of DML-style).  Given the
following explanation, we reasonably expect that the reader be able to
make sense of this interesting example.

During a queue-session (specified by $\mbox{\it ssn\_queue}$), there
are one server(S0) and two clients (C1 and C2); the server chooses
(based on some external information) which client is to be served in
the next round; the chosen client can request the server to create an
empty queue (\verb|queue_nil|), enqueue an element into the current
queue (\verb|queue_enq|), and dequeue an element from the current
non-empty queue (\verb|queue_deq|). Given a type $T$ and a natural
number $N$, the type $\mbox{\it ssn\_queue}(T, N)$ means that the size
of the underlying queue in the current queue-session is $N$.  This
example is largely based on a type for 2-party queue-sessions in
SILL~\cite{SILL}.  What is novel here is an added party plus the use
of dependent types (of DML-style) to specify the size of the
underlying queue in a queue-session.
As a side note, one may be wondering why C1 can keep the correct
account of queue length after C2 performs an operation. The very
reason is that C2 announces to both S0 and C1 which operation C2 is to
perform before it performs it.

\begin%
{figure}
\fontsize{8pt}{9pt}\selectfont
\begin%
{verbatim}
typedef
ssn_s0b1b2_fail = nil

typedef
ssn_s0b1b2_succ =
msg(B2,S0,proof) ::
msg(S0,B2,receipt) :: nil

typedef
ssn_s0b1b2 =
msg(B1,S0,title) :: msg(S0,B1,price) ::
msg(S0,B2,price) :: msg(B1,B2,price) ::
choose(B2,ssn_s0b1b2_succ,ssn_s0b1b2_fail)
\end{verbatim}
\caption{A classic 3-role session type}
\label{figure:ssn_s0b1b2}
\end{figure}
\section%
{An Example of 3-Party Session}
\label{section:S0B1B2}
Let us assume the availability of three roles: Seller(S0), Buyer~1(B1)
and Buyer~2(B2).  A description of the classic
one-seller-and-two-buyers(S0B1B2) protocol due to (Honda et al. 2008) is
essentially given as follows:
\begin{enumerate}
\item B1 sends a book title to S0.
\item S0 replies a quote to both Buyer 1 and Buyer 2.
\item B1 tells B2 how much B1 can contribute:
\begin{enumerate}
\item Assume B2 can afford the remaining part:
\begin{enumerate}
\item B2 sends S0 a proof of payment. 
\item S0 sends B2 a receipt for the sale.
\end{enumerate}
\item Assume B2 cannot afford the remaining part:
\begin{enumerate}
\item B2 terminates.
\end{enumerate}
\end{enumerate}
\end{enumerate}
This protocol is formally captured by the type $\mbox{\it ssn\_s0b1b2}$
given in Figure~\ref{figure:ssn_s0b1b2}. For taking a peek at a running
implementation of this example in ATS, please visit the following link:
\begin%
{center}
\texttt{http://pastebin.com/JmZRukRi}
\end{center}
The steps involved in building a 3-party session
are basically those outlined in Section~\ref{subsection:sketch:3-party_session}.

\section%
{Implementing Session-Typed Channels}
\label{section:implementation}
As far as implementation is of the concern, there is very little that
needs to be done regarding typechecking in order to support
session-typed channels in ATS. The only considerably significant
complication comes from the need for solving constraints generated
during typechecking that may involve various common set operations (on
groups of roles), which the current built-in constraint-solver for ATS
cannot handle. Fortunately, we have an option to export such
constraints for them to be solved with an external constraint-solver
based on Z3~\cite{Z3-tacas08}.

The first implementation of session-typed channels (based on shared
memory) for use in ATS is done in ATS itself, which compiles to C, the
primary compilation target for ATS. Another implementation of
session-typed channels (based on processes) is done in Erlang. As the
ML-like core of ATS can already be compiled into Erlang, we have now an
option to construct distributed programs in ATS that may make use of
session types and then translate these programs into Erlang code for
execution, thus taking great advantage of the infrastructural support
for distributed computing in Erlang.

We outline some key steps taken in both of the implementations. In
particular, we briefly mention an approach to implementing
$\fchanlinktwo$ and $\fchanlinkthree$ that completely removes the need
for explicit forwarding of messages and thus the inefficiency associated
with it.

\def\uch{\mbox{\it uch}}
Let us use $\uch$ to refer to a uni-directional channel that can be
held by two parties; one party can only write to it while the other
can only read from it. Suppose we want to build a pair of multirole
channels $\chpcst$ and $\chncst$ for some groups $G$ and $\setcomp{G}$
of roles; we first create a matrix $M$ of the dimension $\nrole$ by
$\nrole$;, where $\nrole$ is the total number of available roles; for
each $i\in G$ and $j\in\setcomp{G}$, we use $\uch(i,j)$ and
$\uch(j,i)$ to refer to the two uni-directional channels stored in
$M[i,j]$ and $M[j,i]$, respectively. Note that this matrix $M$ is
shared by both $\chpcst$ and $\chncst$; for $\chpcst$, $\uch(i,j)$ and
$\uch(j,i)$ are used to send messages from role $i$ to role $j$ and
receive messages sent from role $j$ to role $i$ on $\chncst$,
respectively; for $\chncst$, it is precisely the opposite.

If $\fchanlinktwo$ is implemented by following
Theorem~\ref{theorem:chan2_link} directly, then a call to
$\fchanlinktwo$ creates a thread/process for performing bidirectional
forwarding of messages explicitly, and the created thread/process only
terminates after no more forwarding is needed. If we assume that a
$\uch$ can be sent onto another $\uch$, which can be readily supported
in both ATS and Erlang, then a much more efficient approach to
implementing $\fchanlinktwo$ can be described as follows. Suppose we
call $\fchanlinktwo$ on two channels $\chcst_0$ and $\chcst_1$ of the
types $\tchan(G,\ST)$ and $\tchan(\setcomp{G},\ST)$; let $M_0$ and
$M_1$ be the matrices in $\chcst_0$ and $\chcst_1$ for holding
uni-directional channels in $\chcst_0$ and $\chcst_1$, respectively;
for each $i\in G$ and $j\in\setcomp{G}$, we send the $\uch$ in
$M_1[i,j]$ onto the one in $M_0[i,j]$ and the $\uch$ in $M_0[j,i]$
onto the one in $M_1[j,i]$; if a $\uch$ is received on the one in
$M_0[i,j]$ ($M_1[j,i]$), then the $\uch$ is put into $M_0[i,j]$
($M_1[j,i]$) to replace the original one. It should be clear that
$\fchanlinkthree$ can implemented similarly.

\section%
{Related Work and Conclusion}
\label{section:related-conclusion}

Session types were introduced by Honda~\cite{Honda93} and further
extended subsequently~\cite{TakeuchiHK94,HondaVK98}.  There have since
been extensive theoretical studies on session types in the
literature(e.g.,~\cite{CastagnaDGP09,GayV10,CairesP10,ToninhoCP11,Vasconcelos12,Wadler12,LindleyM15}).
Multiparty session types, as a generalization of (dyadic) session
types, were introduced by Honda and others~\cite{HondaYC08}, together
with the notion of global types, local types, projection and
coherence.  By introducing dyadic group sessions (g-sessions), we
give a novel form of generalization going from dyadic sessions to
dyadic g-sessions.

\def\tchpos{\mbox{\bf chpos}}
\def\tchneg{\mbox{\bf chneg}}
The notion of dyadic g-sessions is rooted in a very recent attempt
to incorporate session types for dyadic sessions into
ATS~\cite{MTLC-i-sessiontype}. In an effort to formalize session types,
two kinds of channel types $\tchpos(\ST)$ and $\tchneg(\ST)$ are
introduced for positive and negative channels, respectively, directly
leading to the discovery of the notion of single-role channels (as
$\tchpos(\ST)$ and $\tchneg(\ST)$ can simply be translated into
$\tchan(0,\ST)$ and $\tchan(1,\ST)$, respectively) and then the
discovery of the key notion of multirole channels in this paper.

In~\cite{Denielou:2011gl,Neykova:2014ib}, a party can play multiple
roles by holding channels belonging to multiple sessions. We see no
direct relation between such a multirole party and a multirole
channel.  In~\cite{Carbone:2015hl}, coherence is treated as a
generalization of duality. In particular, the binary cut rule is
extended to a multiparty cut rule. But this multiparty cut rule is not
directly related to Theorem~\ref{theorem:chan3_link} as far as we can
tell.




It is in general a challenging issue to establish deadlock-freedom for
session-typed concurrency. There are variations of session types that
introduce a partial order on time stamps~\cite{SumiiK98} or a
constraint on dependency graphs \cite{abs-1010-5566}.  As for
formulations of session types (e.g.,~\cite{CairesP10,Wadler12}) based
on linear logic~\cite{GIRARD87}, the standard technique for
cut-elimination is often employed to establish global progress (which
implies deadlock-freedom). In $\langch$, there is no explicit tracking
of cut-rule applications in the type derivation of a program.  The
notion of DF-reducibility (taken from~\cite{MTLC-i-sessiontype}) is
introduced in order to carry out cut-elimination in the absence of
explicit tracking of cut-rule applications.

Probably, $\langch$ is most closely related to
SILL~\cite{ToninhoCP13}, a functional programming language that adopts
via a contextual monad a computational interpretation of linear
sequent calculus as session-typed processes.  Unlike in $\langch$, the
support for linear types in SILL is not direct and only monadic values
(representing open process expressions) in SILL can be linear.  In
terms of theoretical development, the approach to establishing global
progress in $\langch$ is rooted in the one for SILL (though the latter
does not apply directly).

Also, $\langch$ is related to previous work on incorporating session
types into a multi-threaded functional
language~\cite{VasconcelosRG04}, where a type safety theorem is
established to ensure that the evaluation of a well-typed program can
never lead to a so-called {\em faulty configuration}. However, this
theorem does not imply global progress as a program that is not of
faulty configuration can still deadlock.  Also, we point out $\langch$
is related to recent work on assigning an operational semantics to a
variant of GV~\cite{LindleyM15}. In particular, the approach based on
DF-reducibility to establishing global progress in $\langch$ is
analogous to the one taken to establish deadlock-freedom for this
variant.

As for future work we are particularly interested in applying the
notion of dyadic g-sessions to the design and formalization of a type
system for some variant of $\pi$-calculus. Also, it should be both
exciting and satisfying if a logic-based interpretation can be found
for Theorem~\ref{theorem:chan3_link}.


There are a variety of programming issues that need to be
addressed in order to facilitate the use of session types in
practice. Currently, session types are often represented as
datatypes in ATS, and programming with such session types tends to
involve writing a very significant amount of boilerplate code. In the
presence of large and complex session types, writing such code can be
tedious and error-prone. Naturally, we are interested in developing
some meta-programming support for generating such code automatically.
Also, we are in the process of designing and implementing session
combinators (in a spirit similar to parsing
combinators~\cite{Hutton92}) that can be conveniently called to
assemble subsessions into a coherent whole.


\end{document}